\tikzset{
 photon/.style={decorate, decoration={snake}, draw=red},
    electron/.style={draw=blue, postaction={decorate},
        decoration={markings,mark=at position .55 with {\arrow[draw=blue]{>}}}},
    gluon/.style={decorate, draw=magenta,
        decoration={coil,amplitude=4pt, segment length=5pt}},
    sderiv/.style={postaction={decorate},
        decoration={markings,mark=at position .3 with {\arrow{>}}}},
    tderiv/.style={postaction={decorate},
        decoration={markings,mark=at position .7 with {\arrow{<}}}},
    stderiv/.style={postaction={decorate},
        decoration={markings,mark=at position .7 with {\arrow{<}},mark=at position .3 with {\arrow{>}}}}
}
\newcommand{\E}{\mathfrak{E}}
\newcommand{\fA}{\mathfrak{A}}
\newcommand{\D}{\mathfrak{D}}
\newcommand{\F}{\mathfrak{F}}
\newcommand{\N}{\mathfrak{N}}
\newcommand{\HH}{\mathcal{H}}
\newcommand{\Lcal}{\mathcal {L}}
\newcommand{\Bcal}{\mathcal {B}}
\newcommand{\Hcal}{\mathcal{H}}  
\newcommand{\Ccal}{\mathcal{C}}
\newcommand{\Dcal}{\mathcal{D}}
\newcommand{\Ecal}{\mathcal{E}} 
\newcommand{\Fcal}{\mathcal{F}}
\newcommand{\Mcal}{\mathcal{M}}
\newcommand{\Ocal}{\mathcal{O}}
\newcommand{\Scal}{\mathcal{S}}
\newcommand{\Tcal}{\mathcal{T}}
\newcommand{\Ci}{\mathcal{C}^\infty} 
\newcommand{\Loc}{\mathrm{\mathbf{Loc}}}       
\newcommand{\Obs}{\mathrm{\mathbf{Obs}}}       
\newcommand{\Vect}{\mathrm{\mathbf{Vec}}}       
\newcommand{\WF}{\mathrm{WF}}         
\newcommand{\id}{\mathrm{id}}               
\newcommand{\supp}{\mathrm{supp}}      
\newcommand{\dvol}{\mathrm{dvol}_{\sst{M}}} 
\newcommand{\pp}{\mathrm{pp}}
\newcommand{\rp}{\mathrm{rp}}   
\newcommand{\MS}{\mathrm{MS}}    
\newcommand{\loc}{\mathrm{loc}}
\newcommand{\reg}{\mathrm{reg}}
\newcommand{\ren}{\mathrm{r}}
\newcommand{\mc}{{\mu\mathrm{c}}}
\newcommand{\sd}{\mathrm{sd}}
\newcommand{\Diag}{\mathrm{Diag}}
\newcommand{\DIAG}{\mathrm{DIAG}}
\newcommand{\NN}{\mathbb{N}}          
\newcommand{\RR}{\mathbb{R}}           
\newcommand{\CC}{\mathbb{C}}           
\newcommand{\M}{\mathbb{M}} 	     
\newcommand{\al}{\alpha}
\newcommand{\bet}{\beta}
\newcommand{\Ga}{\Gamma}
\newcommand{\de}{\delta}
\newcommand{\De}{\Delta}
\newcommand{\la}{\lambda}
\newcommand{\ph}{\varphi}
\newcommand{\om}{\omega}
\newcommand{\T}{\cdot_{{}^\Tcal}}
\newcommand{\TT}{\Tcal}
\newcommand{\TTR}{\Tcal_\ren}
\newcommand{\TRH}{\cdot_{{}^{\TTR}}}
\newcommand{\scp}[2]{\left\langle#1,#2\right\rangle}
\newcommand{\Poi}[2]{\{#1,#2\}}
\newcommand{\sst}[1]{\scriptscriptstyle{#1}}  
\newcommand{\minus}{\sst{-1}}   
\newcommand{\1}{\mathds{1}}                         
\newcommand{\pa}{\partial}                              
\newcommand{\be}{\begin{equation}}
\newcommand{\ee}{\end{equation}}
\newcommand{\spec}{\mathrm{spec}}
\global\long\def\bld#1{\boldsymbol{#1}}
\begin{document}
\thispagestyle{empty}
\begin{center}
\noindent\rule[2pt]{\textwidth}{2pt}
\textsc{
\Huge Perturbative\\
algebraic quantum field theory\\
\linespread{2}}
\vspace{0.5cm}
\noindent\rule[2pt]{\textwidth}{1pt}\\
\vspace{2cm}
\begin{minipage}{0.4\linewidth}
\flushleft
\Large \textbf{Klaus Fredenhagen}\\
\vspace{2ex}
 \small{ II Inst. f. Theoretische Physik, Universit\"at Hamburg,}\\
    \small{Luruper Chaussee 149,}\\
    \small{D-22761 Hamburg, Germany}\\ 
\small{\texttt{klaus.fredenhagen@desy.de}}\\
\end{minipage}
\hspace{1.5cm}
\begin{minipage}{0.4\linewidth}
\flushleft
\Large \textbf{Katarzyna Rejzner}\\
\vspace{2ex}
 \small{ Department of Mathematics,}\\
 \small{ University of Rome Tor Vergata}\\ 
    \small{Via della Ricerca Scientifica 1,}\\
\small{I-00133, Rome, Italy}\\ 
\small{\texttt{rejzner@mat.uniroma2.it}}\\
\end{minipage}\\
\linespread{2}
\vspace{11cm}
\linespread{2}
\Large \textsc{2012}
\end{center}
\newpage
 \theoremstyle{plain}
  \newtheorem{df}{Definition}[section]
  \newtheorem{thm}[df]{Theorem}
  \newtheorem{prop}[df]{Proposition}
  \newtheorem{cor}[df]{Corollary}
  \newtheorem{lemma}[df]{Lemma}
    \newtheorem{exa}[df]{Example}

  \theoremstyle{plain}
  \newtheorem*{Main}{Main Theorem}
  \newtheorem*{MainT}{Main Technical Theorem}

  \theoremstyle{definition}
  \newtheorem{rem}[df]{Remark}

 \theoremstyle{definition}
  \newtheorem{ass}{\underline{\textit{Assumption}}}[section]

\newpage
\thispagestyle{empty}
$\ $\\
\vspace{6cm}\\
These notes are based on the course given by Klaus Fredenhagen at the Les Houches Winter School in Mathematical Physics (January 29 - February 3, 2012) and the course \textit{QFT for mathematicians} given by Katarzyna Rejzner in Hamburg for the Research Training Group 1670 (February 6 -11, 2012). 
Both courses were meant as an introduction to modern approach to perturbative quantum field theory and are aimed both at mathematicians and physicists.
\clearpage
\tableofcontents
\markboth{Contents}{Contents}
 \section{Introduction}
Quantum field theory (QFT) is at present, by far, the most successful description of fundamental physics. Elementary physics , to a large extent, explained by a specific quantum field theory, the so-called Standard Model. All the essential structures of the standard model are nowadays experimentally verified. Outside of particle physics, quantum field theoretical concepts have been successfully applied also to condensed matter physics.

In spite of its great achievements, quantum field theory also suffers from several longstanding open problems. The most serious problem is the incorporation of gravity.  
For some time, many people believed that such an incorporation would require a radical change in the foundation of the theory. Therefore, theories with rather different structures were favored, for example string theory or loop quantum gravity. But up to now these alternative theories did not really solve the problem; moreover there are several indications that 
QFT might be more relevant to quantum gravity than originally expected.

Another great problem of QFT is the difficulty of constructing interesting examples. In nonrelativistic quantum mechanics, the construction of a selfadjoint Hamiltonian is possible for most cases of interest, in QFT, however, the situation is much worse. Models under mathematical control are
\begin{itemize}
\item free theories,
\item superrenormalizable models in 2 and 3 dimensions,
\item conformal field theories in 2 dimensions,
\item topological theories in 3 dimensions,
\item  integrable theories in 2 dimensions,
\end{itemize}
but no single interacting theory in 4 dimensions is among these models; in particular neither the standard model nor any of its subtheories like QCD or QED. Instead, one has to evaluate the theory in uncontrolled approximations, mainly using formal perturbation theory, and, in the case of QCD, lattice gauge theories.

If one attempts to incorporate gravity, an additional difficulty is the apparent nonlocality of quantum physics which is in conflict with the geometrical interpretation of gravity in Einstein's theory. Even worse, the traditional treatment of QFT is based on several additional nonlocal concepts, including
\begin{itemize}
\item vacuum (defined as the state of lowest energy)
\item particles (defined as irreducible representations of the Poincar\'{e} group)
\item S-matrix (relies on the notion of particles)
\item path integral (involves nonlocal correlations)
\item Euclidean space (does not exist for generic Lorentzian spacetime)
\end{itemize}

There exists, however, a formulation of QFT which is based entirely on local concepts. This is algebraic  quantum field theory (AQFT), or, synonymously, Local Quantum Physics \cite{Haag}. AQFT relies on the algebraic formulation of quantum theory in the sense of the original approach by Born, Heisenberg and Jordan, which was formalized in terms of C*-algebras by I. Segal. The step from quantum mechanics to QFT is performed by incorporating the principle of locality in terms of local algebras of observables. This is the algebraic approach to field theory proposed by Haag and Kastler \cite{HK}. By the Haag-Ruelle scattering theory, the Haag-Kastler framework on Minkowski space, together with some mild assumptions on the energy momentum
spectrum, already implies the existence of scattering states of particles and of the S-matrix. 

It required some time, before this framework could be generalized to generic Lorentzian spacetimes. Dimock \cite{Dim} applied a direct approach, but the framework he proposed did not contain an appropriate notion of covariance. Such a notion, termed {\it local covariance} was introduced more recently in a programmatic paper by Brunetti, Verch and one of us (K.F.) \cite{BFV}, motivated by the attempt to define the renormalized perturbation series of QFT on curved backgrounds \cite{BF0,HW,HW2}. It amounts to an assignment of algebras of observable to generic spacetimes, subject to a certain coherence condition, formulated in the language of category theory. In Section 3 we will describe the framework in detail.

The framework of locally covariant field theory is a plausible system of axioms for a generally covariant field theory. Before we enter the problem of constructing examples of quantum field theory satisfying these axioms, we describe the corresponding structure in classical field theory (Sect. 4). Main ingredient is the so-called Peierls bracket, by which the classical algebra of observables becomes a Poisson algebra.

Quantization can be done in the sense of formal deformation quantization (at least for free field theories), i.e. in terms of formal power series in $\hbar$, and one obtains an abstract algebra resembling the algebra of Wick polynomials on Fock space (Sect. 5). Interactions can then be introduced by the use of a second product in this algebra, namely the time ordered product. Disregarding for a while the notorious UV divergences of QFT we show how interacting theories can be constructed in terms of the free theory (Sect. 6).

In the final chapter of these lecture notes (Sect. 7), we treat the UV divergences and their removal by renormalization. Here, again, the standard methods are nonlocal and loose their applicability on curved spacetimes. Fortunately, there exists a method which is intrinsically local, namely causal perturbation theory. Causal perturbation theory was originally proposed by St\"uckelberg and Bogoliubov and rigorously elaborated by Epstein and Glaser \cite{EG} for theories on Minkowski space. The method was generalized by Brunetti and one of us (K.F) \cite{BF0} to globally hyperbolic spacetimes and was then combined with the principle of local covariance by Hollands and Wald \cite{HW,HW2}. The latter authors were able to show that renormalization can be done  in agreement  with the principle of local covariance. The UV divergences show up in ambiguities in the definition of the time ordered product. These ambiguities are characterized by a group \cite{HW3,DF04,BDF}, namely the renormalization group as originally introduced by Petermann and St\"uckelberg \cite{SP}.   

\section{Algebraic quantum mechanics}
Quantum mechanics in its original formulation in the Dreim\"annerarbeit by Born, Heisenberg and Jordan is based on an identification of observables with elements of a noncommutative involutive complex algebra with unit.
\begin{df}
An involutive complex algebra $\fA$ is an algebra over the field of complex numbers, together with a map, ${}^*:\fA \rightarrow \fA$, called an involution. The image of an element $A$ of $\fA$ under the involution is written $A^*$. Involution is required to have the following properties:
\begin{enumerate}
\item    for all $A, B \in \fA$: $(A + B)^* = A^* + B^*$, $(A B)^* = B^* A^*$,
 \item   for every $\lambda\in\CC$ and every $A \in \fA$: $(\lambda A)^* = \overline{\lambda} A^*$,
\item    for all $A \in \fA$: $(A^*)^* = A$.
\end{enumerate}
\end{df}
In quantum mechanics such an abstract algebra is realized as an operator algebra on some Hilbert space.
\begin{df}
A representation of an involutive unital algebra $\fA$ is a unital ${}^*$-homomorphism $\pi$ into the algebra of linear operators on a dense subspace $\Dcal$ of a Hilbert space $\Hcal$.
\end{df}
Let us recall that an operator $A$ on a Hilbert space $\Hcal$ is defined as a linear map from a subspace $\Dcal\subset\Hcal$ into $\Hcal$. In particular, if $\Dcal=\Hcal$ and $A$ satisfies  $||A||\doteq \sup_{||x||=1}\{||Ax||\} <\infty$, it is called \textit{bounded}. Bounded operators have many nice properties, but
 in physics many important observables are represented by unbounded ones. The notion of an algebra of bounded operators on a Hilbert space can be abstractly phrased in the definition of a $C^*$-algebra.
\begin{df}
A $C^*$-algebra is a Banach  involutive algebra (Banach algebra with involution satisfying $\|A^*\|=\|A\|$), such that the norm has the $C^*$-property:
\[
        \|A^* A \| = \|A\|\|A^*\|, \quad\forall A\in \fA\,.
\]
\end{df}
One can prove that every $C^*$-algebra is isomorphic to a norm closed algebra of bounded operators $\Bcal(\Hcal)$ on a (not necessarily separable) Hilbert space $\Hcal$. A representation of a $C^*$-algebra $\fA$ is a unital ${}^*$-homomorphism $\pi:\fA\rightarrow\Bcal(\Hcal)$. 

In the simplest example from quantum mechanics the algebra of observables is the associative involutive complex unital algebra generated by two hermitian\footnote{An operator $A$ on a Hilbert space $\Hcal$ with a dense domain $D(A)\subset\Hcal$ is called \textit{hermitian} if $D(A)\subset D(A^*)$ and $Ax=A^*x$ for all $x\in D(A)$. It is \textit{selfadjoint} if in addition $D(A^*)\subset D(A)$.} elements $p$ and $q$ with the
canonical commutation relation
\be\label{comut} [p,q]=-i\hbar\1 \ .
\ee
This algebra can be realized as an operator algebra on some Hilbert space, but the operators corresponding to $p$ and $q$ cannot both be bounded. Therefore it is convenient,  to follow the suggestion of Weyl and to replace the unbounded (hence discontinuous) operators $p$ and $q$ by the unitaries\footnote{An element $A$ of an involutive Banach algebra with unit is called unitary if $A^*A=\1=AA^*$.} (Weyl operators) $W(\al,\beta)$, $\al,\beta\in \RR$. Instead of requiring the canonical commutation relation for $p$ and $q$ one requires
the relation (Weyl relation)
\be\label{W:commutator}
W(\al,\beta)W(\al',\beta')=e^{\frac{i\hbar}{2}(\al\beta'-\al'\beta)}W(\al+\al',\beta,\beta')
\ee
The antilinear involution (adjunction)
\be\label{W:involution}
W(\al,\beta)^*=W(-\al,-\beta) \ .
\ee
replaces the hermiticity condition on $p$ and $q$. The Weyl algebra $\fA_W$  is defined as  the unique $C^*$-algebra generated by unitaries $W(\al,\bet)$ satisfying the relations \eqref{W:commutator}, with involution defined by \eqref{W:involution} and with unit $\1=W(0,0)$.

One can show that if the Weyl operators are represented by operators on a Hilbert space such that they depend strongly continuously\footnote{A net $\{T_\al\}$ of operators on a Hilbert space $\Hcal$ converges strongly to an operator $T$ if and only if $||T_\al x-Tx||\rightarrow 0$ for all $x\in\Hcal$.} on the parameters $\al$ and $\beta$,  then $p$ and $q$ can be recovered as selfadjoint generators, i.e.
\[
W(\al,\beta)=e^{i(\al p+\beta q)}\,,
\]
satisfying the canonical commutation relation \eqref{comut}.
As shown by von Neumann, the C*-algebra $\fA_W$ has up to equivalence only one irreducible representation where the Weyl operators depend strongly continuously on their parameters, namely the Schr\"odinger representation $(\mathcal{L}^2(\RR),\pi)$ with
\be
\left(\pi(W(\al,\beta))\Phi\right)(x)=e^{\frac{i\hbar \al\beta}{2}}e^{i\beta x}\Phi(x+\hbar\al)\ ,
\ee
and the reducible representations with the same continuity property are just multiples of the Schr\"odinger representation. If one does not require continuity there are many more representations, and they have found recently some interest in loop quantum gravity. In quantum field theory the uniqueness results do not apply, and one has to deal with a huge class of inequivalent representations. 

For these reasons it is preferable to define the algebra of observables $\fA$ independently of its representation on a specific Hilbert space as a unital C*-algebra. The observables are the selfadjoint elements, and the possible outcomes of measurements are elements of their spectrum. The spectrum  $\spec(A)$ of $A\in\mathfrak{A}$ is the set of all $\lambda\in\CC$ such that $A-\la\1$ has no inverse in $\fA$.  One might suspect that the spectrum could become smaller if the algebra is embedded in a larger one. Fortunately this is not the case; for physics this mathematical result has the satisfactory effect that the set of possible measurement results of an observable is not influenced by the  inclusion of additional observables. 

Now we know what the possible outcome of an experiment could be, but what concrete value do we get, if we perform a measurement?
In QM this is not the right question to ask. Instead, we can only determine the \textit{probability distribution} of getting particular values from a measurement of an observable $A$. This probability distribution can be obtained, if we know the \textit{state} of our physical system. Conceptually, a state is a prescription for the preparation of a system. This concept entails in particular that experiments can be reproduced and is therefore equivalent to the ensemble interpretation where the statements of the theory apply to the ensemble of equally prepared
systems.

A notion of a state can be also defined abstractly, in the following way:
\begin{df}
A state on an involutive algebra $\fA$ is a linear functional $\omega:\fA\rightarrow \CC$, such that:
\[ \omega(A^*A) \geq 0,\qquad\textrm{and}\  \omega(\1) = 1\]
\end{df}
The first condition can be understood as a positivity condition and the second one is the normalization. The values $\om(A)$ are interpreted as the expectation values of the observable $A$ in the given state. Given an observable $A$ and a state $\omega$ on  a C*-algebra $\fA$ we can reconstruct the full probability distribution $\mu_{A,\omega}$ of measured values of $A$ in the state $\omega$ from its moments, i.e. the expectation values of powers of $A$,
\[
\int \lambda^n d\mu_{A,\omega}(\lambda)= \omega(A^n)\,.
\]
States on $C^*$-algebras are closely related to representations on Hilbert spaces. This is provided by the famous GNS (Gelfand-Naimark-Segal) theorem:
\begin{thm}
Let $\omega$ be a state on the involutive unital algebra $\fA$. Then there exists a representation $\pi$ of the algebra by linear operators on a dense subspace $\Dcal$ of some Hilbert space $\HH$ and a unit vector $\Omega \in \Dcal$, such that
\[
\omega(A) = (\Omega, \pi(A)\Omega)\, ,
\]
and $\Dcal = \{\pi(A)\Omega, A \in \fA\}$.
\end{thm}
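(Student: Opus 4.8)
The plan is to run the Gelfand--Naimark--Segal construction, building the Hilbert space directly out of $\fA$ with the state $\omega$ supplying the inner product. The starting point is the sesquilinear form $\langle A,B\rangle\doteq\omega(A^*B)$ on $\fA$, which is conjugate-linear in the first and linear in the second argument by linearity of $\omega$ together with the axioms for ${}^*$, and which is positive semidefinite since $\langle A,A\rangle=\omega(A^*A)\geq0$. In general it is degenerate, so the next step is to pass to the quotient by its null space $\Ncal\doteq\{A\in\fA:\omega(A^*A)=0\}$.

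The one step that requires an actual inequality is the claim that $\Ncal$ is a \emph{left} ideal. This follows from the Cauchy--Schwarz inequality $|\omega(A^*B)|^2\leq\omega(A^*A)\,\omega(B^*B)$, valid for any positive semidefinite Hermitian form: if $A\in\Ncal$, then taking $B=C^*CA$ gives $\omega(A^*C^*CA)=0$, i.e.\ $CA\in\Ncal$, for every $C\in\fA$. The same inequality shows that $\langle\cdot,\cdot\rangle$ descends to a genuine (definite) inner product on $\Dcal\doteq\fA/\Ncal$: if $A'-A\in\Ncal$ then $\omega(A'^*B)-\omega(A^*B)=\overline{\omega(B^*(A'-A))}=0$, and likewise in the other slot. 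Let $\HH$ be the completion of $\Dcal$ with respect to the associated norm.

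Next I would define the representation by left multiplication, $\pi(A)[B]\doteq[AB]$, which is well defined on $\Dcal$ precisely because $\Ncal$ is a left ideal. It is linear, multiplicative ($\pi(AB)=\pi(A)\pi(B)$), unital ($\pi(\1)=\id$), and satisfies $(\pi(A)[B],[C])=\omega((AB)^*C)=\omega(B^*(A^*C))=([B],\pi(A^*)[C])$, so $\pi$ is a unital ${}^*$-homomorphism into the linear operators on the dense subspace $\Dcal$ of $\HH$. Finally set $\Omega\doteq[\1]$; then $(\Omega,\Omega)=\omega(\1)=1$, so $\Omega$ is a unit vector, $(\Omega,\pi(A)\Omega)=\omega(\1^*A)=\omega(A)$, and $\{\pi(A)\Omega:A\in\fA\}=\fA/\Ncal=\Dcal$ is dense in $\HH$ by construction.

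The only genuine obstacle is the left-ideal property of $\Ncal$, which rests on Cauchy--Schwarz for a merely \emph{semi}definite form; one should record that this version of the inequality still holds, since the usual argument (optimising $\langle A+\lambda B,A+\lambda B\rangle\geq0$ over $\lambda\in\CC$) does not use definiteness. Everything else is routine verification. I would also remark that no boundedness of $\pi(A)$ is being asserted here --- that would require the $C^*$-norm and fails for a general involutive algebra --- so $\pi(A)$ is only an operator on the dense domain $\Dcal$, which is exactly what the statement claims.
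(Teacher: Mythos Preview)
Your proof is correct and follows essentially the same GNS construction as the paper: form $\langle A,B\rangle=\omega(A^*B)$, quotient by the null space $\Ncal$ (shown to be a left ideal via Cauchy--Schwarz), complete to $\HH$, let $\pi$ act by left multiplication, and set $\Omega=[\1]$. The only differences are cosmetic: the paper spells out hermiticity of the form via a polarization identity and appends a uniqueness-up-to-unitary-equivalence argument (not required by the statement), while you add the useful remarks on semidefinite Cauchy--Schwarz and the absence of any boundedness claim.
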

\begin{proof}
The proof is quite simple. First let us  introduce a scalar product on the algebra in terms of the state $\omega$ by
\[
\scp{A}{B}\doteq\omega(A^*B)\,.
\]
Linearity for the right and antilinearity for the left factor are obvious, hermiticity $\scp{A}{B}=\overline{\scp{B}{A}}$ follows from the positivity of $\omega$ and the fact that we can write $A^*B$ and $B^*A$ as linear combinations
 of positive elements:
 \begin{align*}
  2(A^*B + B^*A) &= (A + B)^*(A + B) - (A - B)^*(A - B) \,,\\
2(A^*B - B^*A) &= -i(A + iB)^*(A + iB) + i(A - iB)^*(A - iB)\,.
  \end{align*}
  Furthermore, positivity of $\omega$ immediately implies that the scalar product is positive semidefinite, i.e. $\scp{A}{A}\geq 0$ for all $A\in\fA$. We now study the set
  \[
  \N\doteq\{A\in\fA|\omega(A^*A)=0\}\,.
  \]
  We show that $\N$ is a left ideal of $\fA$. Because of the Cauchy-Schwarz inequality $\N$ is a subspace of $\fA$. Moreover, for $A \in \N$ and $B \in \fA$ we have, again because of the Cauchy-Schwarz inequality:
\[
\omega((BA)^*BA) = \omega(A^*B^*BA) = \scp{B^*BA}{A} \leq \sqrt{\scp{B^*BA}{ B^*BA}}\sqrt{\scp{A}{A}} = 0\,,
\]
hence $BA\in\N$. Now we define $\Dcal$ to be the quotient $\fA/\N$. Per constructionem the scalar product is positive definite on $\Dcal$, thus we can complete it to obtain a Hilbert space $\HH$. The representation $\pi$ is induced by left multiplication of the algebra,
\[
\pi(A)(B + \N) \doteq AB + \N\,,
\]
and we set $\Omega=\1+\N$.
In case that $\fA$ is a C*-algebra, one can show that the operators $\pi(A)$ are bounded, hence admitting unique continuous extensions to bounded operators on $\HH$. 
 
It is also straightforward to see that the construction is unique up to unitary equivalence. Let $(\pi',\Dcal',\HH',\Omega')$ be another quadruple satisfying the conditions of the theorem. Then we define an operator $U : \Dcal\rightarrow\Dcal'$ by
\[
U\pi(A)\Omega \doteq \pi'(A)\Omega'.
\]
$U$ is well defined, since $\pi(A)\Omega = 0$ if and only if $\omega(A^*A) = 0$, but then we have also $\pi'(A)\Omega' = 0$. Furthermore $U$ preserves the scalar product and is invertible and has therefore a unique extension to a unitary operator from $\HH$ to $\HH'$. This shows that $\pi$ and $\pi'$ are unitarily equivalent.
\end{proof}

%
The representation $\pi$ will not be irreducible, in general, i.e. there may exist a nontrivial closed invariant subspace. In this case, the state $\om$ is not pure, which means that it is a convex combination of other states,
\be
\om=\la\om_1+(1-\la)\om_2\ ,\ 0<\la<1\ ,\ \om_1\not=\om_2\ .
\ee
To illustrate the concept of the GNS representation, let $\pi_{1,2}$ be representations of $\mathfrak{A}$ on Hilbert spaces $\HH_{1,2}$, respectively. Choose unit vectors $\Psi_1\in\HH_1$, $\Psi_2\in\HH_2$ and define the states 
\be
\om_i(A)=\scp{\Psi_i}{\pi_i(A)\Psi_i}\ ,\ i=1,2\ .
\ee
Let $\om$ be the convex combination
\be
\om(A)=\frac12\om_1(A)+\frac12\om_2(A)\ .
\ee
$\om$ is a linear functional satisfying the normalization and positivity conditions and therefore is again a state in the algebraic sense.
Now let $\HH=\HH_1\oplus\HH_2$ be the direct sum of the two Hilbert spaces and let 
\be
\pi(A)=\left(\begin{array}{cc}
                    \pi_1(A) & 0\\
                    0              & \pi_2(A)
                    \end{array}\right)
\ee
Then the vector 
\be
\Psi=\frac{1}{\sqrt{2}}\left(\begin{array}{c}
                                             \Psi_1\\
                                             \Psi_2                                  
                                              \end{array}\right)
\ee
satisfies the required relation
\be
\om(A)=\scp{\Psi}{\pi(A)\Psi} \ .
\ee
For more information on operator algebras see \cite{Reed,BR1,BR2}.

In classical mechanics one has a similar structure. Here the algebra of observables is commutative and can be identified with the algebra of continuous functions on phase space. In addition, there is a second product, the Poisson bracket. This product is only densely defined. States are probability measures, and pure states correspond to the evaluation of functions at a given point of phase space.
\section{Locally covariant field theory}
Field theory involves infinitely many degrees of freedom, associated to the points of spacetime. Crucial for the success of field theory is a principle which regulates the way these degrees of freedom influence each other. This is the principle of locality, more precisely expressed by the German word {\it Nahwirkungsprinzip}. It states that each degree of freedom is influenced only by a relatively small number of other degrees of freedom. This induces a concept of neighborhoods in the set of degrees of freedom.

The original motivation for developing QFT was to combine the QM with special relativity. In this sense we expect to have in our theory some notion of \textit{causality}. Let us briefly describe what it means in mathematical terms. In special relativity space and time are described together with one object, called \textit{Minkowski spacetime}. Since it will be useful later on, we define now a general notion of a \textit{spacetime} in physics.
\begin{df}
A spacetime $(M,g)$ is a smooth (4 dimensional) manifold (Hausdorff, paracompact, connected) with a smooth pseudo-Riemannian metric \footnote{a smooth tensor field $g\in\Gamma(T^*M\otimes T^*M)$, s.t. for every $p\in M$, $g_p$ is a symmetric non degenerate bilinear form.} of Lorentz signature (we choose the convention $(+,-,-,-)$).
\end{df}
A spacetime $M$ is said to be orientable if there exists a differential form of maximal degree (a volume form), which does not vanish anywhere. We say that $M$ is time-orientable if there exists a smooth vector field $u$ on $M$ such that for ever $p\in M$ it holds $g(u,u)>0$. We will always assume that our spacetimes are orientable and time-orientable. We fix the orientation and choose the time-orientation by selecting a specific vector field $u$ with the above property.  Let $\gamma: \RR\supset I\rightarrow M$ be a smooth curve in $M$, for $I$ an interval in $\RR$. We say that $\gamma$ is causal (timelike) if it holds $g(\dot{\gamma},\dot{\gamma})\geq 0$ ($>0$), where $\dot{\gamma}$ is the vector tangent to the curve.

Given the global timelike vectorfield
$u$ on $M$, one calls a causal curve $\gamma$ future-directed if
$g(u,\dot{\gamma}) > 0$ all along $\gamma$, and analogously one
calls $\gamma$ past-directed if $g(u,\dot{\gamma}) < 0$. This
induces a notion of time-direction in the spacetime
$(M,g)$. For any point $p \in M$, $J^{\pm}(p)$ denotes the set of
all points in $M$ which can be connected to $x$ by a
future(+)/past$(-)$-directed causal curve $\gamma: I \to M$ so that $x
= \gamma(\inf \,I)$. The set $J^+(p)$ is called the causal future and 
$J^-(p)$ the causal past of $p$. The boundaries $\partial J^\pm(p)$ of these regions 
are called respectively: the \textit{future}/\textit{past lightcone}.
 Two subsets $O_1$ and $O_2$ in $M$ are called
causally separated if they cannot be connected by a causal curve, i.e.\
if for all $x \in \overline{O_1}$,
 $J^{\pm}(x)$ has empty intersection with $\overline{O_2}$.
By $O^{\perp}$ we denote the causal complement of $O$, i.e.\ the
largest open set in $M$ which is causally separated from $O$.
\begin{figure}[!htb]
\begin{center}
\includegraphics[width=7cm]{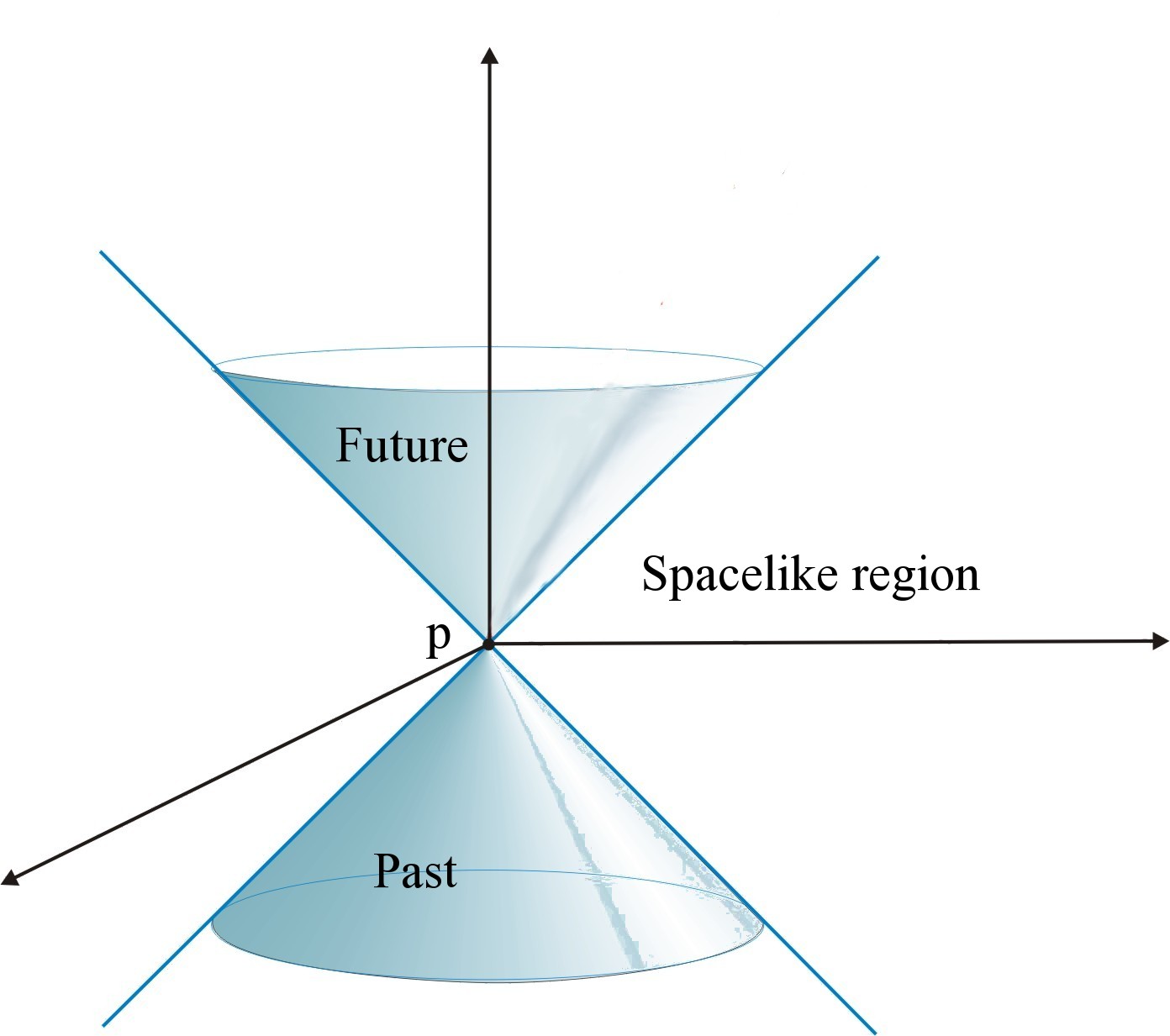}
\end{center}
\caption{A lightcone in Minkowski spacetime.\label{lightcone}}
\end{figure}

In the context of general relativity we will also make use of following definitions:
\begin{df}
A causal curve is \textbf{future inextendible} if there is no $p\in M$ such that: 
\[
\forall U\subset M\mbox{open neighborhoods of }p,\, \exists t'\textrm{ s.t. }\gamma(t)\in U\forall t>t'\,.
\]
\end{df}
\begin{df}
\textbf{A Cauchy hypersurface} in $M$ is a smooth subspace of $M$ such that every inextendible causal curve intersects it exactly once.
\end{df}
\begin{df}
An oriented and time-oriented spacetime $M$ is called \textbf{globally hyperbolic} if there exists a smooth foliation of $M$ by Cauchy hypersurfaces.
\end{df}

For now let us consider a simple case of the \textit{Minkowski spacetime} $\M$ which is just $\RR^4$ with the diagonal metric $\eta=\textrm{diag}(1,-1,-1,-1)$. A \textit{lightcone} with apex $p$ is shown on figure \ref{lightcone}, together with the future and past of $p$.

One of the main principles of special relativity tells us that physical systems which are located in causally disjoint regions should in some sense be independent. Here we come to the important problem: \textit{How to implement this principle in quantum theory?}
A natural answer to this question is provided by the Haag-Kastler framework \cite{Haag0,HK}, which is based on the principle of \textit{locality}.
In the previous section we argued that operator algebras are a natural framework for quantum physics. Locality can be realized by identifying the algebras of observables that can be measured in given bounded regions of spacetime. In other words we associate to each bounded $\Ocal\subset M$ a $C^*$-algebra $\fA(\Ocal)$. This association has to be compatible with a physical notion of subsystems. It means that if we have a region $\Ocal$ which lies inside $\Ocal'$ we want the corresponding algebra $\fA(\Ocal)$ to be contained inside $\fA(\Ocal')$, i.e. in a bigger region we have more observables. This property can be formulated as the \textbf{Isotony} condition for the net $\{\fA(\Ocal)\}$ of local algebras associated to bounded regions of the spacetime. In the Haag-Kastler framework one specializes to Minkowski space $\M$ and imposes some further, physically motivated, properties:
\begin{itemize}
	\item {\bf Locality (Einstein causality)}. Algebras associated to spacelike separated regions commute:
$\Ocal_1$ spacelike separated from $\Ocal_2$, then $[A, B] = 0$, $\forall A \in \fA(\Ocal_1)$, $B \in \fA(\Ocal_2)$. This expresses the ``independence'' of physical systems associated to regions  $\Ocal_1$ and $\Ocal_2$.
	\item {\bf Covariance}. The Minkowski spacetime has a large group of isometrics, namely the Poincar\'e group. We require that  there exists a family of isomorphisms $\alpha_L^\Ocal : \fA(\Ocal) \rightarrow\fA(L \Ocal)$ for Poincar\'e transformations $L$, such that for
$\Ocal_1 \subset \Ocal_2$ the restriction of $\alpha_L^{\Ocal_2}$ to $\fA(\Ocal_1)$ coincides with $\alpha_L^{\Ocal_1}$ and such that: $\alpha^{L\Ocal}_{L'}\circ \alpha^{\Ocal}_{L}=\alpha^{\Ocal}_{L'L}$,
	\item {\bf Time slice axiom}: the algebra of a neighborhood of a Cauchy surface of a given region coincides with the algebra of the full region. Physically this correspond to a well-posedness of an initial value problem. We need to determine our observables in some small time interval $(t_0-\epsilon,t_0+\epsilon)$ to reconstruct the full algebra.
	\item {\bf Spectrum condition}. This condition corresponds physically to the positivity of energy. One assumes that there exist a compatible family of faithful representations $\pi_{\Ocal}$ of 
	$\fA(\Ocal)$ on a fixed Hilbert space (i.e. the restriction of $\pi_{\Ocal_2}$ to $\fA(\Ocal_1)$ coincides with $\pi_{\Ocal_1}$ for $\Ocal_1\subset\Ocal_2$) such that translations are unitarily implemented, i.e. there is a unitary representation $U$ of the translation group satisfying
	\[U(a)\pi_{\Ocal}(A)U(a)^{-1}=\pi_{\Ocal+a}(\alpha_a(A))\ ,\ A\in\fA(\Ocal),\]
	and such that the joint spectrum of the generators $P_{\mu}$ of translations $e^{iaP}=U(a)$, $aP=a^\mu P_\mu$, is contained in the forward lightcone: $\sigma(P)\subset \overline{V}_+$. 
	\end{itemize}

We now want to generalize this framework to theories on generic spacetimes. To start with, we may think of a globally hyperbolic neighborhood  $U$ of a spacetime point $x$ in some spacetime  $M$. 
Moreover, we assume that any causal curve in $M$ with end points in $U$ lies entirely 
in $U$. Then we require that the structure of the algebra of observables associated to $U$ should be completely independent of the world outside. We may formalize this idea by requiring that for any embedding $\chi:M\to N$ of a globally hyperbolic manifold $M$ into another one $N$ which preserves the metric, the orientations and the causal structure\footnote{The property of \textit{causality preserving} is defined as follows: let $\chi:M\to N$, for any 
			causal curve $\gamma : [a,b]\to N$, if $\gamma(a),\gamma(b)\in\chi(M)$ then for all $t	\in ]a,b[$ we have: $\gamma(t)\in\chi(M)$ .}
(these embeddings will be called {\it admissible}), there exist an injective homomorphism 
\be
\alpha_{\chi}:\mathfrak{A}(M)\to\mathfrak{A}(N)
\ee
of the corresponding algebras of observables, moreover if $\chi_1:M\to N$ and $\chi_2:N\to L$ are embeddings as above then we require the covariance relation
\be
\alpha_{\chi_2\circ\chi_1}=\alpha_{\chi_2}\circ\alpha_{\chi_1} \ .
\ee
In this way we described a functor $\mathfrak{A}$ between two categories: the category $\Loc$ of globally hyperbolic spacetimes with admissible embeddings as arrows and the category $\Obs$ of algebras (Poisson algebras for classical physics and C*-algebras for quantum physics) with homomorphisms as arrows. 

We may restrict the category of spacetimes to subregions of a given spacetime and the arrows to inclusions. In this way we obtain the Haag-Kastler net of local algebras on a globally hyperbolic spacetime as introduced by Dimock. In case the spacetime has nontrivial isometries,  we obtain additional embeddings, and the covariance condition above provides a representation of the group of isometries by automorphisms of the Haag-Kastler net.

The causality requirements of the Haag-Kastler framework, i.e. the commutativity of observables localized in spacelike separated regions, is encoded in the general case in the tensor structure of the functor $\mathfrak{A}$. Namely, the category of globally hyperbolic manifolds has the disjoint union as a tensor product, with the empty set as unit object and where admissible embeddings $\chi:M_1\otimes M_2\to N$ have the property that the images $\chi(M_1)$ and $\chi(M_2)$ cannot be connected by a causal curve. On the level of C*-algebras we may use the minimal tensor product as a tensor structure. See \cite{BFIR} for details.

The solvability of the initial value problem can be formulated as the requirement that the algebra $\mathfrak{A}(N)$ of any neighborhood $N$ of some Cauchy surface $\Sigma$ already coincides with $\mathfrak{A}(M)$. This is the {\it time slice axiom} of axiomatic quantum field theory. It can be used to describe the evolution between different Cauchy surfaces. As a first step we associate to each Cauchy surface $\Sigma$ the inverse limit
\be
\mathfrak{A}(\Sigma)=\lim^{\leftarrow}_{N\supset \Sigma}\mathfrak{A}(N)\ .
\ee
Elements of the inverse limit consist of sequences $A=(A_N)_{L_A\supset N\supset \Sigma}$ with $\alpha_{N\subset K}(A_N)=A_K$, $K\subset L_A$, with the equivalence relation
\be
A\sim B \text{ if }A_N=B_N \text{ for all }N\subset L_A\cap L_{B}\ . 
\ee
The algebra $\mathfrak{A}(\Sigma)$ can be embedded into $\mathfrak{A}(M)$ by
\be
\alpha_{\Sigma\subset M}(A)=\alpha_{N\subset M}(A_N) \text{ for some (and hence all) }\Sigma\subset N\subset L_A\ .
\ee
If we now adopt the time slice axiom we find that each homomorphism $\alpha_{N\subset M}$ is an isomorphism. Hence $\alpha_{\Sigma\subset M}$ is also an isomorphism and we obtain the propagator between two Cauchy surfaces $\Sigma_1$ and $\Sigma_2$ by
\be
\alpha^M_{\Sigma_1\Sigma_2}=\alpha_{\Sigma_1\subset M}^{-1}\circ\alpha_{\Sigma_2\subset M}
\ee
This construction resembles constructions in topological field theory for the description of cobordisms. But there one associates Hilbert spaces to components of the boundary and maps between these Hilbert spaces to the spacetime itself. This construction relies on the fact that for these theories the corresponding Hilbert spaces are finite dimensional. It was shown \cite{Torr}
that a corresponding construction for the free field in 3 and more dimensions does not work, since the corresponding Boboliubov transformation is not unitarily implementable 
(Shale's criterion \cite{Sha} is violated). Instead one may associate to the Cauchy surfaces the corresponding algebras of canonical commutation relations and to the cobordism  an isomorphism between these algebras. For the algebra of canonical anticommutation relations for the free Dirac fields the above isomorphism was explicitly constructed \cite{Ver}. Our general argument shows that the association of a cobordism between two Cauchy surfaces of globally hyperbolic spacetimes to an isomorphism of algebras always exists provided the time slice axiom is satisfied. As recently shown, the latter axiom is actually generally valid in perturbative Algebraic Quantum Field Theory \cite{ChF}.     

In the Haag-Kastler framework on Minkowski space an essential ingredient was translation symmetry. This symmetry allowed the comparison of observables in different regions of spacetime and was  (besides locality) a crucial input for the analysis of scattering states.

In the general covariant framework sketched above no comparable structure is available. Instead one may use fields which are subject to a suitable covariance condition, termed locally covariant fields. A locally covariant field is a family $\ph_M$ of fields on spacetimes $M$ such that for every embedding $\chi:M\to N$ as above
\be
\alpha_{\chi}(\ph_M(x))=\ph_N(\chi(x)) \ .
\ee
If we consider fields as distributions with values in the algebras of observables, a field $\ph$ may be considered as a natural transformation between the functor $\mathfrak{D}$ of test function spaces to the functor $\mathfrak{A}$ of field theory.
The functor $\mathfrak{D}$ associates to every spacetime $M$ its space of compactly supported $\mathcal{C}^{\infty}$-functions,
\be
\mathfrak{D}(M)=\mathcal{C}^{\infty}_{\mathrm{c}}(M,\RR)\ ,
\ee
and to every embedding $\chi:M\to N$ of spacetimes the pushforward of test functions $f\in\mathfrak{D}(M)$
\be
\mathfrak{D}\chi\equiv \chi_*\ ,\ \chi_*f(x)=\left\{ \begin{array}{ccc}
                                                                                    f(\chi^{-1}(x)) &,&x\in\chi(M)\\
                                                                                    0                    &,&\text{ else}
                                                                                    \end{array}
                                                                           \right.
\ee
$\mathfrak{D}$ is a covariant functor. Its target category is the category of locally convex vector spaces $\Vect$ which contains also the category of topological algebras which is the target category for $\mathfrak{A}$. A natural transformation $\ph:\mathfrak{D}\to\mathfrak{A}$ between covariant functors with the same source and target categories is a family of morphisms $\ph_M:\mathfrak{D}(M)\to\mathfrak{A}(M)$, $M\in \mathrm{Obj}(\Loc)$ such that
\be
\mathfrak{A}\chi\circ\ph_M=\ph_N\circ\mathfrak{D}\chi
\ee
with $\mathfrak{A}\chi=\alpha_{\chi}$.
\section{Classical field theory}
Before we enter the arena of quantum field theory we show that the concept of local covariance leads to a nice reformulation of classical field theory
in which the relation to QFT becomes clearly visible. Let us consider a scalar field theory. On a given spacetime $M$ the possible field configurations are the smooth functions on $M$. If we embed a spacetime $M$ into another spacetime $N$, the field configurations on $N$ can be pulled back to 
$M$, and we obtain a functor $\mathfrak{E}$ from $\Loc$ to the category $\Vect$ of locally convex vector
spaces
\be
\mathfrak{E}(M)=\mathcal{C}^{\infty}(M,\RR)\ ,\ \mathfrak{E}\chi=\chi^*
\ee  
with the pullback $\chi^*\ph=\ph\circ\chi$ for $\ph\in\mathcal{C}^{\infty}(M,\RR)$. Note that $\mathfrak{E}$ is contravariant, whereas the functor $\mathfrak{D}$ of test function spaces with compact support is covariant.

The classical observables are real valued functions on $\mathfrak{E}(M)$, i.e. (not necessarily linear) functionals. An important property of a functional is its spacetime support. Is is defined as a generalization of the distributional support, namely as the set of points $x\in M$ such that $F$ depends on the field configuration in any neighbourhood of $x$.
\begin{align}\label{support}
\supp\, F\doteq\{ & x\in M|\forall \text{ neighbourhoods }U\text{ of }x\ \exists \ph,\psi\in\E(M), \supp\,\psi\subset U 
\\ & \text{ such that }F(\ph+\psi)\not= F(\ph)\}\ .\nonumber
\end{align}
Here we will discuss only compactly supported functionals. Next one has to select a class of functionals which are sufficiently regular such that all mathematical operations one wants to perform are meaningful and which on the other side is large enough to cover the interesting cases.

One class one may consider is the class $\F_\reg(M)$ of regular polynomials
\be
F(\ph)=\sum_{\text{finite}}\int dx_1\dots dx_n f_n(x_1,\dots,x_n)\ph(x_1)\dots\ph(x_n)
\ee
with test functions $f_n\in\mathfrak{D}(M^n)$. Another class $\F_\loc(M)$ consists of the local functionals
\be
F(\ph)=\int dx \Lcal(x,\ph(x),\pa\ph(x),\dots)
\ee
where $\Lcal$ depends smoothly on $x$ and on finitely many derivatives of $\ph$ at $x$. The local functionals arise as actions and induce the dynamics.
The only regular polynomials in this class are the linear functionals
\be
F(\ph)=\int dx f(x)\ph(x) \ .
\ee 
It turns out to be convenient to characterize the admissible class of functionals in terms of their functional derivatives.
\begin{df}[after \cite{Neeb}]
Let $X$ and $Y$ be topological vector spaces, $U \subseteq X$ an open set and $f:U \rightarrow Y$ a map. The derivative of $f$ at $x$ in the direction\index{derivative!on a locally convex vector space} of $h$ is defined as
\be\label{de}
df(x)(h) \doteq \lim_{t\rightarrow 0}\frac{1}{t}\left(f(x + th) - f(x)\right)
\ee
whenever the limit exists. The function $f$ is called differentiable\index{infinite dimensional!calculus} at $x$ if $df(x)(h)$ exists for all $h \in X$. It is called continuously differentiable if it is differentiable at all points of $U$ and
$df:U\times X\rightarrow Y, (x,h)\mapsto df(x)(h)$
is a continuous map. It is called a $\Ccal^1$-map if it is continuous and continuously differentiable. Higher derivatives are defined for $\Ccal^n$-maps by 
\be
d^n f (x)(h_1 , \ldots , h_n ) \doteq \lim_{t\rightarrow 0}\frac{1}{t}\big(d^{n-1} f (x + th_n )(h_1 , \ldots, h_{n-1} ) -
 d^{n-1}f (x)(h_1 , \ldots, h_{n-1}) \big)
 \ee
\end{df}
In particular it means that if $F$ is a smooth functional on $\E(M)$, then its $n$-th derivative at the point $\ph\in\E(M)$ is a compactly supported distributional density $F^{(n)}(\ph)\in\Ecal'(M^n)$. There is a distinguished volume form on $M$, namely the one provided by the metric: $\sqrt{-\textrm{det}(g)}d^4x$. We can use it to construct densities from functions and to provide an embedding of $\Dcal(M^n)$
 into $\Ecal'(M^n)$. For more details on distributions on manifolds, see chapter 1 of \cite{Baer}. Using the distinguished volume form we can identify derivatives $F^{(n)}(\ph)$ with distributions.
We further need some conditions on their wave front sets.

Let us make a brief excursion to the concept of wave front sets and its use for the treatment of distributions. Readers less familiar with these topics can find more details in the appendix \ref{distr}
or refer to \cite{Hoer} or chapter 4 of  \cite{BaeF}.
Let $t\in\mathcal{D}'(\RR^n)$ and $f\in\mathcal{D}(\RR^n)$. The Fourier transform of the product $ft$ is a smooth function. 
If this function vanishes fast at infinity for all $f\in\mathcal{D}(\RR^n)$, $t$ itself is a smooth function. Singularities of $t$ show up in the absence of fast decay in some directions. 
A point $(x,k)\in\RR^n\times(\RR^n\setminus\{0\})$ is called a regular point of $t$ if there exists a test function $f$ with $f(x)=1$ such that the Fourier transform of $ft$ decays strongly in an open cone around $k$. The wave front set of $t$ is now defined as the complement of the set of regular points  of $t$ in $\RR^n\times(\RR^n\setminus\{0\})$.

On a manifold $M$ the definition of the Fourier transform depends on the choice of a chart. But the property of strong decay in some direction (characterized now by a point $(x,k)$, $k\not=0$ of the cotangent bundle $T^*M$) turns out to be independent of the chart. Therefore the wave front set $\mathrm{WF}$ of a distribution on a manifold $M$ is a well defined closed conical subset of the cotangent bundle (with the zero section removed).

Let us illustrate the concept of the wave front set in two examples. The first one is the $\delta$-function. We find
\be
\int dx f(x)\delta(x)e^{ikx}=f(0) \ ,
\ee
hence $\mathrm{WF}(\delta)=\{(0,k),k\not=0\}$. 

The other one is the function $x\mapsto(x+i\epsilon)^{-1}$ in the limit $\epsilon\downarrow0$. We have
\be
\lim_{\epsilon\downarrow0}\int dx \frac{f(x)}{x+i\epsilon}e^{ikx}=-i\int_k^{\infty}dk'\hat{f}(k') \ .
\ee
Since the Fourier transform $\hat{f}$ of a test function $f\in\mathcal{D}(\RR)$ is strongly decaying for $k\to\infty$, $\int_k^{\infty}dk'\hat{f}(k')$ is strongly decaying for $k\to\infty$, but for $k\to -\infty$ we obtain
\be
\lim_{k\to-\infty}\int_k^{\infty}dk'\hat{f}(k')=2\pi f(0)\ , 
\ee  
hence
\be
\mathrm{WF}(\lim_{\epsilon\downarrow0}(x+i\epsilon)^{-1})=\{(0,k),k<0\}\ .
\ee

The wave front sets provide a simple criterion for the pointwise multiplicability of distributions. Namely, let $t,s$ be distributions on an $n$ dimensional manifold $M$ such that the pointwise sum (Whitney sum) of their wave front sets
\be
\mathrm{WF}(t)+\mathrm{WF}(s)=\{(x,k+k')|(x,k)\in\mathrm{WF}(t),(x,k')\in\mathrm{WF}(s)\}
\ee
does not intersect the zero section of $T^*M$. Then the pointwise product $ts$ can be defined by
\be
\langle ts,fg\rangle=\frac{1}{(2\pi)^n}\int dk\, \widehat{tf}(k)\widehat{sg}(-k)
\ee 
for test functions $f$ and $g$ with sufficiently small support and where the Fourier transform refers to an arbitrary chart covering the supports of $f$ and $g$. 
The convergence of the integral on the right hand side follows from the fact, that for every $k\not=0$ either $\widehat{tf}$ decays fast in a conical neighborhood around $k$ or 
$\widehat{sg}$ decays fast in a conical neighborhood around $-k$ whereas the other factor is polynomially bounded.

The other crucial property is the characterization of the propagation of singularities. To understand it in more physical terms it is useful to use an analogy with Hamiltonian mechanics. Note that the cotangent bundle $T^*M$ has a natural symplectic structure. The symplectic 2-form is defined as an exterior derivative of the canonical one-form, given in local coordinates as $
    \theta_{(x,{k})}=\sum_{{\mathfrak i}=1}^n {k}_idx^i $ (${k}_i$ are coordinates in the fibre).  Let $P$ be a partial differential operator with real principal symbol $\sigma_P$. Note that $\sigma_P$ is  a function on $T^*M$ and its differential $d\sigma_P$ is a 1-form. On a symplectic manifold 1-forms can be canonically identified with vector fields by means of the symplectic form. Therefore every differentiable function $H$ determines a unique vector field $X_H$, called the Hamiltonian vector field with the Hamiltonian $H$. Let $X_P$ be the Hamiltonian vector field corresponding to $\sigma_P$. Explicitly it can be written as:
\[
X_P=\sum_{i=1}^n\frac{\partial \sigma_P}{\partial {k}_j}\frac{\partial}{\partial x_j}-\frac{\partial \sigma_P}{\partial x_j}\frac{\partial}{\partial {k}_j}
\]
Let us now consider the integral curves (Hamiltonian flow) of this vector field. A curve $(x_j(t),{k}_j(t))$ is an integral curve of $X_P$ if it fulfills the system of equations (Hamilton's equations):
\begin{align*}
\frac{dx_j}{dt}&=\frac{\partial \sigma_P}{\partial {k}_j}\,,\\
\frac{d{k}_j}{dt}&=-\frac{\partial \sigma_P}{\partial x_j}\,.
\end{align*}
The set of all such solution curves is called the bicharacteristic flow. Along the Hamiltonian flow it holds
 $\frac{d\sigma_P}{dt}=X_P(\sigma_P)=0$ (this is the law of conservation of energy for autonomous systems in classical mechanics), so $\sigma_P$ is constant under the bicharacteristic flow. If $\sigma_P((x_j(t),{k}_j(t)))=0$ we call the corresponding flow null. The set of all such integral curves is called the null bicharacteristics.
 
Let us now define the characteristics of $P$ as $\mathrm{char}P=\{(x,k)\in T^*M|\sigma(P)(x,k)=0\}$ of $P$. Then the theorem on the propagation of singularities states that the  wave front set of a solution $u$ of the equation $Pu=f$ with $f$ smooth is a union of orbits of the Hamiltonian flow $X_P$ on the characteristics $\mathrm{char}P$. 

In field theory on Lorentzian spacetime we are mainly interested in hyperbolic differential operators. Their characteristics is the light cone, and the principal symbol is the metric on the cotangent bundle. The wave front set of solutions therefore is a union of null geodesics together with their cotangent vectors 
$k=g(\dot{\gamma},\cdot)$.

We already have  all the kinematical structures we need. Now in order to specify a concrete physical model we need to introduce the dynamics. This can be done by means of 
a \textit{generalized Lagrangian}\index{generalized Lagrangian}. As the name suggests the idea is motivated by Lagrangian mechanics. Indeed, we can think of this formalism as a way to make precise the variational calculus
in field theory. Note that since our spacetimes are globally hyperbolic, they are never compact. Moreover we cannot restrict ourselves to compactly supported field configurations, since the nontrivial solutions of globally hyperbolic equations don't belong to this class. Therefore we cannot identify the action with a functional on $\E(M)$ obtained 
by integrating the Lagrangian density over the whole manifold. Instead we follow \cite{BDF} and define a Lagrangian $L$ as a natural transformation between the functor of test function spaces $\D$ and the functor $\F_\loc$ such that it satisfies $\supp(L_M(f))\subseteq \supp(f)$ and the additivity rule 
\footnote{We do not require linearity since in quantum field theory the renormalization flow does not preserve the linear structure; it respects, however, the additivity rule (see \cite{BDF}).}
\[
L_M(f+g+h)=L_M(f+g)-L_M(g)+L_M(g+h)\,,
\]
for $f,g,h\in\D(M)$ and $\supp\,f\cap\supp\,h=\varnothing$.  
The action $S(L)$ is now defined as an equivalence class of Lagrangians  \cite{BDF}, where two Lagrangians $L_1,L_2$ are called equivalent $L_1\sim L_2$  if
\be\label{equ}
\supp (L_{1,M}-L_{2,M})(f)\subset\supp\, df\,, 
\ee
for all spacetimes $M$ and all $f\in\D(M)$. 
This equivalence relation allows us to identify Lagrangians differing by a total divergence.  For the free minimally coupled (i.e. $\xi=0$) scalar field the generalized Lagrangian is given by:
\be\label{Lscalar}
L_M(f)(\ph)=\frac{1}{2}\int\limits_M (\nabla_\mu\ph\nabla^\mu\ph-m^2\ph^2)f\,\dvol\,.
\ee

The equations of motion are to be understood in the sense of \cite{BDF}. Concretely, the Euler-Lagrange derivative\index{derivative!Euler-Lagrange} of $S$ is a natural transformation $S':\E\to\D'$ defined as
\be\label{ELd}
\left<S'_M(\ph),h\right>=\left<L_M(f)^{(1)}(\ph),h\right>\,,
 \ee
 with $f\equiv 1$ on $\supp h$. The field equation is now a condition on $\ph$:
\be
 S_M'(\ph)=0\,.\label{eom}
\ee
Note that the way we obtained the field equation is analogous to variational calculus on finite dimensional spaces. We can push this analogy even further and think of variation of a functional in a direction in configuration space given by an infinite dimensional vector field. This concept is well understood in mathematics and for details one can refer for example to \cite{Neeb,Michor}. Here we consider only variations in the directions of compactly supported configurations, so the space of vector fields we are interested in can be identified with
 $\mathfrak{V}(M)=\{X:\mathfrak{E}(M)\to\mathfrak{D}(M)| X\text{ smooth} \}$.
 In more precise terms this is the space of vector fields on $\mathfrak{E}(M)$, considered as a manifold\footnote{An infinite dimensional manifold is modeled on a locally convex vector space just as a finite dimensional one is modeled on $\RR^n$. For more details see \cite{Neeb,Michor}.} modeled over $\mathfrak{D}(M)$. The set of functionals
\be
\ph\mapsto\langle S_M'(\ph),X(\ph)\rangle\ , \  X\in\mathfrak{V}(M)
\ee
is an ideal $\mathfrak{I}_S(M)$ of $\mathfrak{F}(M)$ with respect to pointwise multiplication,
\be
(F\cdot G)(\ph)=F(\ph)G(\ph) \ .
\ee 
The quotient 
\be
\mathfrak{F}_S(M)=\mathfrak{F}(M)/\mathfrak{I}_S(M)
\ee
can be interpreted as the space of solutions of the field equation. The latter can be identified with the phase space of the classical field theory.

We now want to equip $\mathfrak{F}_S(M)$ with a Poisson bracket. Here we rely on a method originally introduced by Peierls. Peierls considers the influence of an additional term in the action. Let $F\in\mathfrak{F}_\loc(M)$ be a local functional. We are interested in the flow $(\Phi_{\lambda})$ on $\mathfrak{E}(M)$ which deforms solutions of the original field equation $S_M'(\ph)=\omega$ with a given source term $\omega$ to those of the perturbed equation $S_M'(\ph)+\lambda F^{(1)}(\ph)=\omega$. Let $\Phi_0(\ph)=\ph$ and 
\be\label{flow}
\frac{d}{d\lambda}\left.\left(S_M'(\Phi_{\lambda}(\ph))+F^{(1)}(\Phi_{\lambda}(\ph))\right)\right|_{\lambda=0}=0 \ .
\ee
Note that the second variational derivative of the unperturbed action induces an operator $S''_M(\ph):\E(M)\rightarrow\D'(M)$.  We define it in the following way:
\[
\left<S''_M(\ph),h_1\otimes h_2\right>\doteq \left<L^{(2)}_M(f)(\ph),h_1\otimes h_2\right>\,,
\]
where $f\equiv 1$ on the supports of $h_1$ and $h_2$. This defines $S''_M(\ph)$ as an element of $\D'(M^2)$ and by Schwartz's kernel theorem we can associate to it an operator from $\D(M)$ to $\D'(M)$.
Actually, since $L_M(f)$ is local, the second derivative has support on the diagonal, so $S''_M(\ph)$ 
can be evaluated on smooth functions $h_1$, $h_2$, where only one of them is required to be compactly supported, and it induces an operator (the so called linearized Euler-Lagrange operator) $E'[S_M](\ph):\E(M)\rightarrow \D'(M)$. 

From \eqref{flow} it follows that the vector field $\ph\mapsto X(\ph)=\frac{d}{d\lambda}\Phi_{\lambda}(\ph)|_{\lambda=0}$ satisfies the equation
\be\label{X:eq}
\left<S''_M(\ph),X(\ph)\otimes\cdot\right>+\left<F^{(1)}(\ph),\cdot\right>=0\,,
\ee
which in a different notation can be written as
\[
\left<E'[S_M](\ph),X(\ph)\right>+F^{(1)}(\ph)=0\,.
\]
We now assume that $E'[S_M](\ph)$ is, for all $\ph$, a normal hyperbolic differential operator $\E(M)\rightarrow\E(M)$, and let $\Delta_S^R,\Delta_S^A$ be the retarded and advanced Green's operators, i.e. linear operators  $\D(M)\rightarrow\E(M)$ satisfying:
\begin{align*}
E'[S_M]\circ\Delta^{R/A}_S&=\id_{\D(M)}\,,\\
\Delta^{R/A}_S\circ(E'[S_M]\big|_{\D(M)})&=\id_{\D(M)}\,.
\end{align*}
Moreover, with the use of Schwartz's kernel theorem one can identify $\Delta^{R/A}_S:\D(M)\rightarrow\E(M)$ with elements of $\Dcal'(M^2)$. As such, they are required to satisfy the following support properties:
\begin{align}
\supp(\Delta^R)&\subset\{(x,y)\in M^2| y\in J^-(x)\}\label{retarded}\,,\\
\supp(\Delta^A)&\subset\{(x,y)\in M^2| y\in J^+(x)\}\label{advanced}\,.
\end{align} 
Their difference $\De_S= \De_S^{\rm A} -\De_S^{\rm R}$ 
is called the causal propagator of the Klein-Gordon equation. 
Coming back to equation \eqref{X:eq}
we have now two distinguished solutions for  $X$,
\be
X^{R,A}(\ph)=\Delta_S^{R,A}F^{(1)}(\ph)\ .
\ee
The difference of the associated derivations on $\mathfrak{F}(M)$ defines a product
\be\label{Peierls:bracket}
\Poi{F}{G}_S(\ph)=\langle \Delta_S(\ph) F^{(1)}(\ph),G^{(1)}(\ph)\rangle
\ee
on $\mathfrak{F}_\loc(M)$, the so-called Peierls bracket.

The Peierls bracket satisfies the conditions of a Poisson bracket, in particular the Jacobi identity (for a simple proof see \cite{Jak}). Moreover, if one of the entries is in the ideal $\mathfrak{I}_S(M)$, also the bracket is in the ideal, hence the Peierls bracket induces a Poisson bracket on the quotient algebra.

In standard cases, the Peierls bracket coincides with the canonical Poisson bracket. Namely let
\be
\mathcal{L}(\ph)=\frac12\pa^\mu\ph\pa_\mu\ph-\frac{m^2}{2}\ph^2-\frac{\lambda}{4!}\ph^4\ .
\ee 
Then $S'_M(\ph)=-\left((\square+m^2)\ph+\frac{\lambda}{3!}\ph^3\right)$ and $S_M''(\ph)$ is the linear operator
\be
-\left(\square +m^2+\frac{\lambda}{2}\ph^2\right)
\ee
(the last term acts as a multiplication operator).

The Peierls bracket is
\be
\Poi{\ph(x)}{\ph(y)}_S=\Delta_S(\ph)(x,y)
\ee
where $x\mapsto \Delta_S(\ph)(x,y)$ is a solution of the (at $\ph$) linearized equation of motion with the initial conditions
\be
\Delta_S(\ph)(y^0,\mathbf{x};y)=0\ ,\ \frac{\pa}{\pa x^0}\Delta_S(\ph)(y^0,\mathbf{x};y)=\delta(\mathbf{x},\mathbf{y})\ .
\ee
This coincides with the Poisson bracket in the canonical formalism. Namely, let $\ph$ be a solution of the field equation.
Then
\be
0=\{(\square +m^2)\ph(x)+\frac{\lambda}{3!}\ph^3(x),\ph(y)\}=(\square +m^2+\frac{\lambda}{2}\ph(x)^2))\{\ph(x),\ph(y)\}
\ee
hence the Poisson bracket satisfies the linearized field equation with the same initial conditions as the Peierls bracket.

Let us now discuss the domain of definition of the Peierls bracket. It turns out that it is a larger class of functionals than just $\F_\loc(M)$. To identify this class we use the fact that the WF set of $\Delta_S$ is given by
\[
\WF(\Delta_S) = \{(x,k;x',-k') \in \dot{T}^*M^2 | (x,k) \sim (x',k')\}\,,
\]
where the equivalence relation $\sim$ means that there exists a null geodesic strip such that both $(x,k)$ and $(x',k')$ belong to it. A null geodesic strip is a curve in $T^*M$ of the form $(\gamma(\lambda),k(\lambda))$, $\lambda\in I\subset \RR$, where $\gamma(\lambda)$ is a null geodesic parametrized by $\lambda$ and $k(\lambda)$ is given by $k(\la)=g(\dot{\gamma}(\la),\cdot)$. This follows from the theorem on the propagation of singularities together with the initial conditions and the antisymmetry of $\Delta_S$. (See \cite{Rad} for a detailed argument.)

It is now easy to check, using H\"ormander's criterion on the multiplicability of distributions \cite{Hoer} that the Peierls bracket \eqref{Peierls:bracket} is well defined if   $F$ and $G$ are such that the sum of the WF sets of the functional derivatives $F^{(1)}(\ph), G^{(1)}(\ph)\in\Ecal'(M)$ and $\Delta\in\Dcal'(M^2)$ don't intersect the 0-section of the cotangent bundle $T^*M^2$. 
This is the case if the functionals fulfill the following criterion: 
 \be\label{mlsc}
\WF(F^{(n)}(\ph))\subset \Xi_n,\quad\forall n\in\NN,\ \forall\ph\in\E(M)\,,
\ee
where $\Xi_n$ is an open cone defined as 
\be\label{cone}
\Xi_n\doteq T^*M^n\setminus\{(x_1,\dots,x_n;k_1,\dots,k_n)| (k_1,\dots,k_n)\in (\overline{V}_+^n \cup \overline{V}_-^n)_{(x_1,\dots,x_n)}\}\,,
\ee
where $(\overline{V}_{\pm})_x$ is the closed future/past lightcone understood as a conic subset of
$T^*_xM$. We denote the space of smooth compactly supported functionals, satisfying (\ref{mlsc}) by $\F_\mc(M)$ and call them \textit{microcausal functionals}. This includes in particular local functionals. For them the support of the functional derivatives is on the thin diagonal, and the wave front sets satisfy $\sum k_i=0$.

To see that $\Poi{.}{.}_S$ is indeed well defined on  $\F_\mc(M)$, note that $\WF(\Delta)$ consists of elements $(x,x',k,k')$, where $k$, $k'$ are dual to lightlike vectors in $T_xM$, $T_{x'}M$ accordingly. On the other hand, if $(x,k_1)\in\WF(F^{(1)}(\ph))$, then $k_1$ is necessarily dual to a vector which is spacelike, so $k_1+k$ cannot be 0. The same argument is valid for $G^{(1)}(\ph)$. Moreover it can be shown that $\Poi{F}{G}_S\in\F_{\mc}(M)$. 
The classical field theory is defined as $\fA(M)=(\F_\mc(M),\Poi{.}{.}_S)$. One can check that $\fA$ is indeed a covariant functor from $\Loc$ to $\Obs$, the category of Poisson algebras.
\section{Deformation quantization of free field theories}\label{deformation}
Starting from the Poisson algebra $(\mathfrak{F}_{\mc}(M),\Poi{.}{.}_S)$ one may try to construct an associative algebra $(\mathfrak{F}_\mc(M)[[\hbar]],\star)$ such that for $\hbar\to0$
\be
F\star G\to F\cdot G 
\ee 
and
\be
[F,G]_{\star}/i\hbar\to \Poi{F}{G}_S \ .
\ee
For the Poisson algebra of functions on a finite dimensional Poisson manifold the deformation quantization exists in the sense of formal power series due to a theorem of Kontsevich \cite{Kontsevich:1997vb}. 
In field theory the formulas of Kontsevich lead to ill defined terms, and a general solution of the problem is not known. But in case the action is quadratic in the fields, the $\star$-product can be explicitly defined by
\be
(F\star G)(\ph)\doteq\sum\limits_{n=0}^\infty \frac{\hbar^n}{n!}\left<F^{(n)}(\ph),\left(\tfrac{i}{2}\Delta_S\right)^{\otimes n}G^{(n)}(\ph)\right>\,,
\ee 
which can be formally written as $e^{\frac{i\hbar}{2}\left\langle\Delta_S,\frac{\delta^2}{\delta\ph\delta\ph'}\right\rangle}F(\ph)G(\ph')|_{\ph'=\ph}$.
This product is well defined (in the sense of formal power series in $\hbar$) for regular functionals $F,G\in\F_\reg(M)$ and satisfies the conditions above.
Let for instance
\be
F(\ph)=e^{\int dx\ph(x)f(x)}\ ,\ G(\ph)=e^{\int dx\ph(x)g(x)}\ ,
\ee
with test functions $f,g\in\mathfrak{D}(M)$. We have 
\be
\frac{\delta^n}{\ph(x_1)\dots\ph(x_n)}F(\ph)=f(x_1)\dots f(x_n)F(\ph)
\ee
and hence
\begin{gather}
(F\star G)(\ph)\\
=\sum_{n=0}^{\infty}\frac{1}{n!}\left(\int dxdy\frac{i\hbar}{2}\Delta_S(x,y)f(x)g(y)\right)^n\, F(\ph)G(\ph)
\end{gather}

For later purposes we want to extend the product to more singular functionals which includes in particular the local functionals. We split
\be
\tfrac{i}{2}\Delta_S=\Delta_S^{+}-\Delta_S^{-}
\ee
in such a way that the wavefront set of $\Delta_S$ is decomposed into two disjoint parts. The wave front set of $\Delta_S$ consists of pairs of points $x,x'$ which can be connected by a null geodesic, and of covectors $(k,k')$ where $k$ is the cotangent vector of the null geodesic at $x$ and $-k'$ is the cotangent vector of the same null geodesic at $x'$. The lightcone with the origin removed consists of two disjoint components,  the first one containing the positive frequencies and the other one the negative frequencies. The WF set of the positive frequency part of $\Delta_S$ is therefore:
\begin{equation}\label{spectrum}
\WF(\Delta_S^+)=\{(x,k;x′,-k')\in \dot{T}M^2|(x,k)\sim(x',k'), k\in (\overline{V}_+)_x\}\,.
\end{equation}
 On Minkowski space one could choose $\Delta_S^+$ as the  Wightman 2-point-function, i.e. the vacuum expectation value of the product of two fields. This, however, becomes meaningless in a more general context, since a generally covariant concept of a vacuum state does not exist. Nevertheless, such a decomposition always exist, but is not unique and the difference between two different choices of $\Delta_S^+$ is always a smooth symmetric function. Let us write $\Delta_S^+=\tfrac{i}{2}\Delta_S+H$
We then consider the linear functional derivative operator
\be
\Gamma_H=\langle H,\frac{\delta^2}{\delta\ph^2}\rangle
\ee
and define a new $\star$-product by
\be
F\star_HG=\al_H\left((\al^{-1}_HF)\star(\al_H^{-1}G)\right)\,, 
 \ee
 where $\al_H\doteq e^{\frac{\hbar}{2}\Gamma_H}$.  This product differs from the original one in the replacement of $\frac{i}{2}\Delta_S$ by $\Delta_S^+$. 
This star product can now be defined on a much larger space of functionals, namely the microcausal ones $\F_\mc(M)$. The transition between these two star products correspond to normal ordering, and the $\star_H$-product is just an algebraic version of Wick's theorem. The map $\al_H$ provides the equivalence between $\star$ and $\star_H$ on the space of regular functionals $\F_\reg(M)$. Its image can be then completed to a larger space $\F_\mc(M)$. We can also build a corresponding (sequential) completion $\al_H^{-1}(\F_\mc(M))$ of the source space. This amounts to extending  $\F_\reg(M)$ with all elements of the form $\lim_{n\rightarrow \infty}\al_H^{-1}(F_n)$, where $(F_n)$ is a convergent sequence in $\F_\mc(M)$ with respect to the H\"ormander topology \cite{BDF,Hoer}. We recall now the definition of this topology. 
\begin{df}
Let us denote the space of compactly supported distributions with WF sets contained in a conical set $C\subset T^*M^n$ by   $\Ecal'_{C}(M^n)$. 
Now let  $C_n\subset \Xi_n$ be a closed cone contained in $\Xi_n$ defined by \eqref{cone}. We introduce (after \cite{Hoer,BaeF,BDF}) the following family of seminorms on $\Ecal'_{C_n}(M^n)$: 
\[
p_{n,\ph,\tilde{C},k} (u) = \sup_{{k}\in \tilde{C}}\{(1 + |{k}|)^k |\widehat{\ph u}({k})|\}\,,
\]
where the index set consists of $(n,\ph,\tilde{C},k)$ such that $k\in \NN_0$, $\ph\in \Dcal(M)$ and $\tilde{C}$ is a closed cone in $\RR^n$ with $(\supp ( \ph ) \times \tilde{C}) \cap C_n = \varnothing$. These seminorms, together with the seminorms of the weak topology provide a defining system for a locally convex topology denoted by $\tau_{C_n}$. To control the wave front set properties inside open cones, we take an inductive limit.  The resulting topology is denoted by $\tau_{\Xi_n}$. One can show that $\Dcal(M)$ is sequentially dense in  $\Ecal'_{\Xi_n}(M)$ in this topology.

For microcausal functionals it holds that $F^{(n)}(\ph)\in\Ecal'_{\Xi_n}(M)$, so we can equip $\F_\mc(M)$ with the initial topology with respect to mappings:
\be\label{tauH}
\Ci(\E(M),\RR)\ni F\mapsto F^{(n)}(\ph)\in(\Ecal_{\Xi_n}(M),\tau_{\Xi_n})\quad n\geq0\,,
\ee
and denote this topology by $\tau_\Xi$.
\end{df}
The locally convex vector space of local functionals $\F_\loc(M)$ is dense in $\F_\mc(M)$ with respect to $\tau_\Xi$. To see these abstract concepts at work let us consider the example of the Wick square:
 \begin{exa}\label{Wick}
Consider a sequence $F_n(\ph)=\int \ph(x)\ph(y)g_n(y-x)f(x)$ with a smooth function $f$ and a sequence of smooth functions $g_n$ which converges to the $\delta$ distribution in the H\"ormander topology. By applying $\al_H^{-1}=e^{-\tfrac{\hbar}{2}\Gamma_H}$, we obtain a sequence 
\[
\al_H^{-1}F_n= \int (\ph(x)\ph(y)g_n(y-x)f(x)-H(x,y)g_n(y-x)f(x))\,,
\]
The limit of this sequence can be identified with $\int :\ph(x)^2:f(x)$, i.e.:
\[
\int :\ph(x)^2:f(x)=\lim_{n\rightarrow\infty}\int (\ph(x)\ph(y)-H(x,y))g_n(y-x)f(x)
\]
We can write it in a short-hand notation as a coinciding point limit:
\[
:\ph(x)^2:\,=\lim_{y\rightarrow x}(\ph(x)\ph(y)-H(x,y))\,.
\]
 We can see that transforming with $\al_H^{-1}$ corresponds formally to a subtraction of $H(x,y)$.
Now, to recognize the Wick's theorem let us consider a product of two Wick squares $ :\ph(x)^2: :\ph(y)^2:$. With the use of the isomorphism $\al_H^{-1}$ this can be written as:
\begin{multline*}
\int \ph(x)^2 f_1(x)\star_H\int \ph(y)^2 f_2(y)=\\
\int \ph(x)^2 \ph(y)^2f_1(x) f_2(y)+2i\hbar \int \ph(x) \ph(y)\De_S^+(x,y)f_1(x) f_2(y)-\hbar^2\int (\De_S^+(x,y))^2f_1(x) f_2(y)\,.
\end{multline*}
Omitting the test functions and using $\al_H^{-1}$ we obtain
\[
:\ph(x)^2::\ph(y)^2:=:\ph(x)^2 \ph(y)^2:+4 :\ph(x) \ph(y):\frac{i\hbar}{2}\De_S^+(x,y)+2\Big(\frac{i\hbar}{2}\De_S^+(x,y)\Big)^2\,,
\]
which is a familiar form of the Wick's theorem applied to $:\ph(x)^2::\ph(y)^2:$.
\end{exa}
Different choices of $H$ differ only by a smooth function, hence all the algebras  $(\al_H^{-1}(\F_\mc(M)[[\hbar]]),\star)$ are isomorphic and define an abstract algebra $\fA(M)$. Since for $F \in \fA(M)$ it holds $\al_HF\in \fA_H(M)=(\F_{\mc}(M)[[\hbar]],\star_H)$, we can realize $\fA(M)$ more concretely as the space of families $\{ \al_HF \}_H$,  numbered by possible choices of $H$, where  $F \in \fA(M)$, fulfilling the relation
\[
 F_{H'} = \exp(\hbar \Gamma_{H'-H}) F_H\,,
\]
equipped with the product
\[
 (F \star G)_H = F_H \star_H G_H.
\]
The support of $F \in \fA(M)$ is defined as $\supp(F) = \supp(\al_HF)$. Again, this is indepedent of $H$. Functional derivatives are defined by
\[
\left<\frac{\delta F}{\delta \ph},\psi\right> = \al_H^{-1}\left<\frac{\delta \al_HF}{\delta \ph},\psi\right>\,,
\]
which is well defined as $\Gamma_{H'-H}$ commutes with functional derivatives. In the next step we want to define the involution on our algebra. 
Note that the complex conjugation satisfies the relation:
\be
\overline{F\star G}=\overline{G}\star\overline{F}\,.
\ee
Therefore we can use it to define an involution  $F^*(\ph)\doteq\overline{F(\ph)}$. The resulting structure is an involutive noncommutative algebra $\fA(M)$, which provides a quantization of $(\F_{\mc}(M),\Poi{.}{.}_S)$. To see that this is equivalent to canonical quantization, let us look at the commutator of two smeared fields  $\Phi(f)$,  $\Phi(g)$, where $\Phi(f)(\ph)\doteq \int\! f \ph\,\  \dvol $. The commutator reads
\[
[\Phi(f),\Phi(g)]_{\star_H}=i\hbar \langle f,\De_S g\rangle \ , \quad f,g\in \D(M)\,,
\]
This indeed reproduces the canonical commutation relations. Here we used the fact that the choice of $\Delta_S^+$ is unique up to a symmetric function, which doesn't contribute to the commutator (which is antisymmetric). In case $\Delta_S^+$ is a distribution of positive type (as in the case of the Wightman 2-point-function) the linear functional on $\mathfrak{F}(M)$
 \be
 \om(F)=F(0)
 \ee 
 is a state (the vacuum state in the special case above), and the associated GNS representation is the Fock representation. The kernel of the representation is the ideal generated by the field equation.
 
 Let us now discuss the covariance properties of Wick products. As seen in example \ref{Wick}, polynomial functionals in $\fA_H(\Mcal)$ can be interpreted as Wick polynomials.
Corresponding elements of $\fA(\Mcal)$ can be obtained by applying $\al_H^{-1}$. The resulting object will be denoted by
\be\label{polynomials1}
\int :\Phi_{x_1}\dots\Phi_{x_n}:_H f(x_1,\dots, x_n)\doteq \al^{-1}_H\Big(\int \Phi_{x_1}\dots\Phi_{x_n} f(x_1,\dots, x_n)\Big)\,.
\ee
where $\Phi_{x_i}$ are evaluation functionals, $f\in\Ecal'_{\Xi_n}(M^n,V)$. 
 
 The assignment of $\fA(M)$ to a spacetime $M$ can be made into a functor $\fA$ from the category $\Loc$ of spacetimes to the category of  topological *-algebras $\Obs$ and by composing with a forgetful functor to the category $\Vect$ of topological vector spaces. Admissible embeddings are mapped to pullbacks, i.e. for $\chi:M\rightarrow M'$ we set $\fA\chi F(\ph)\doteq F(\chi^*\ph)$. Locally covariant quantum fields are natural transformations between $\D$ and $\fA$.  Let us denote the extended space of locally covariant quantum fields by $\Fcal_q$. We shall require Wick powers to be elements of $\Fcal_q$ in the above sense. On each object $M$ we have to construct the map ${\TT_1}_M$ from the classical algebra $\F_\loc(M)$ to the quantum algebra $\fA(M)$ in such a way that
\be
\label{covariance}
{\TT_1}_{M}(\Phi_{M}(f))(\chi^*\ph)={\TT_1}_{M'}(\Phi_{M'}(\chi_*f))(\ph)\,.
\ee
As we noted above, classical functionals can be mapped  to $\fA_H(M)$ by identification \eqref{polynomials1}. This however doesn't have right covariance properties. A detailed discussion is presented in section 5 of \cite{BFV}. Here we only give a a sketch of the argument for the Wick square. For each object $M\in\Loc$ we choose $H_M$  (so ${\TT_1}_M=\al^{-1}_{H_M}$ )  and  going through the definitions it is easy to see that, for an admissible embedding $\chi:M\rightarrow M'$
\[
\fA\chi\big(:\Phi^2:_{H_M}(x)\big)=:\Phi^2:_{H_{M'}}(\chi(x))+H_{M'}(\chi(x),(x))-H_{M}(x,x)
\]
holds. It was shown in \cite{BFV} that redefining Wick powers to become covariant amounts to solving certain cohomological problem. The result reproduces the solution, proposed earlier in \cite{HW}, to define $\TT_1$ as $\al^{-1}_{H+w}$, where $w$ is the smooth part of the Hadamard 2-point function $\omega=\frac{u}{\sigma}+v\ln\sigma+w$ with $\sigma(x,y)$ denoting the square of the length of the geodesic connecting $x$ and $y$ and with geometrical determined smooth functions $u$ and $v$.
\section{Interacting theories and the time ordered product}
If we have an action for which $S''_M$ still depends on $\ph$, we choose a particular $\ph_0$ and split
\be
S_M(\ph_0+\psi)=\frac12 \langle S_M''(\ph_0),\psi\otimes\psi\rangle +S_I(\ph_0,\psi) \ .
\ee
From now on we drop the subscript $M$ of $S_I$, since it's clear that we work on a fixed manifold.
We now introduce the linear operator
\be
\TT=e^{i\hbar\langle\Delta_S^D,\frac{\delta^2}{\delta\psi^2}\rangle} 
\ee
which acts on $\F_\reg(M)$ as
\[
(\TT F)(\ph)\doteq \sum_{n=0}^\infty \frac{\hbar^n}{n!}\left<(i\Delta_S^D)^{\otimes n},F^{(2n)}(\ph)\right>,\,,
\]
with the Dirac propagator $\Delta_S^D=\frac12(\Delta_S^R+\Delta_S^A)$ at $\ph_0$. Formally, $\TT$ may be understood as the operator of convolution with the oscillating Gaussian measure with covariance $i\hbar\Delta_S^D$. By
\be
F\T G=\TT\left(\TT^{-1}F\cdot \TT^{-1}G\right)
\ee
we define a new product on $\mathfrak{F}(M)$ which is the time ordered product with respect to $\star$ and which is equivalent to the pointwise product of classical field theory.
We then define a linear map
\be
R_{S_I}F=\Big(e_{\sst{\TT}}^{S_I}\Big)^{\star-1}\star \left(e_{\sst{\TT}}^{S_I}\T F\right)
\ee
where $e_{\sst{\TT}}$ is the exponential function with respect to the time ordered product,
\be
e_{\sst{\TT}}^F=\TT\big( e^{\TT^{-1}F}\big) \ .
\ee
$R_{S_I}$ is invertible with the inverse
\be
R_{S_I}^{-1}F=e_T^{-S_I}\cdot_T\left(e_T^{S_I}\star F\right)
\ee
We now define the $\star$ product for the full action by
\be
F\star_SG=R_{S_I}^{-1}\left(R_{S_I}F\star R_{S_I}G\right)
\ee
\global\long\def\poisson#1#2{\left\lfloor #1,#2\right\rceil }

\global\long\def\bld#1{\boldsymbol{#1}}

\begin{fmffile}{SettingPAQFT}

\def\FD{\parbox{7mm}{
\begin{center}
\begin{fmfgraph}(5,5)
\fmfleft{F}
\fmfdot{F}
\end{fmfgraph}
\end{center}}}

\def\FFlop{\parbox{20mm}{
\begin{center}
\begin{fmfgraph}(20,15)
\fmfleft{F} \fmfright{G}
\fmfdot{F}
\fmf{phantom}{F,G}
\fmf{plain}{F,F}
\end{fmfgraph}
\end{center}}}

\def\FG{\parbox{13mm}{
\begin{center}
\begin{fmfgraph}(20,15)
\fmfleft{F} \fmfright{G}
\fmfdot{F,G}
\fmf{phantom}{F,G}
\end{fmfgraph}
\end{center}}}

\def\dumbbell{\parbox{10mm}{
\begin{center}
\begin{fmfgraph}(20,10)
\fmfleft{F} \fmfright{G}
\fmfv{decor.shape=circle,decor.filled=empty, decor.size=2thick}{F,G}
\fmf{plain}{F,G}
\end{fmfgraph}
\end{center}}}

\def\FoneFG{\parbox{20mm}{
\begin{center}
\begin{fmfgraph}(20,15)
\fmfleft{F} \fmfright{G}
\fmfdot{F,G}
\fmf{phantom}{F,G}
\fmf{plain}{F,F}
\end{fmfgraph}
\end{center}}}

\def\FGoneG{\parbox{20mm}{
\begin{center}
\begin{fmfgraph}(20,15)
\fmfleft{F} \fmfright{G}
\fmfdot{F,G}
\fmf{phantom}{F,G}
\fmf{plain}{G,G}
\end{fmfgraph}
\end{center}}}

\def\FoneG{\parbox{13mm}{
\begin{center}
\begin{fmfgraph}(20,15)
\fmfleft{F} \fmfright{G}
\fmfdot{F,G}
\fmf{plain}{F,G}
\end{fmfgraph}
\end{center}}}

\def\FdecoG{\parbox{15mm}{
\begin{center}
\begin{fmfgraph*}(20,15)
\fmfleft{F} \fmfright{G}
\fmflabel{$F$}{F}
\fmflabel{$G$}{G}
\fmfdot{F,G}
\fmf{fermion}{F,G}
\end{fmfgraph*}
\end{center}}}
\section{Renormalization}
Unfortunately, the algebraic structures discussed so far are well defined only if $S_I$ is a regular functional. An easy extension is provided by the operation of normal ordering as described in the section of deformation quantization. This operation transforms the time ordering operator $\TT$ into another one $\TT'$, such that the new time ordered product is now defined with respect to the Feynman propagator $\Delta_S^F=i\Delta_S^D+H$, no longer the Dirac propagator $\Delta_S^D$. Note that the Feynman propagator does depend on the choice of $\Delta_S^+$.  Contrary to the $\star_H$ product which is everywhere defined due to the wave front set properties of the positive frequency part of $\Delta_S$ the time ordered product is in general undefined since the wave front set of the Feynman propagator contains the wave front set of the $\delta$-function.  We want, however, to extend to a larger class which contains in particular
all local functionals. As already proposed by St\"uckelberg \cite{Stuck} and Bogoliubov \cite{BP,BS} and carefully worked out by Epstein and Glaser \cite{EG}, the crucial problem is the definition of time ordered products of local functionals.
Let us first consider a special case.

Let $F=\frac12\int dx \ph(x)^2f(x)$, $G=\frac12\int dx \ph(x)^2g(x)$. Then
the time ordered product $\cdot_{\TT'}$ is formally given by
\be\label{ex:DeltaF}
(F\cdot_{\TT'}G)(\ph)=F(\ph)G(\ph)+i\hbar \int dxdy\ph(x)\ph(y)f(x)g(y)\Delta_S^F(x,y)-\frac{\hbar^2}{2}\int dxdy \Delta_S^F(x,y)^2 f(x)g(y)\ .
\ee 
But the last term contains the pointwise product of a distribution with itself. For $x\not=y$ the covectors $(k,k')$ in the wave front set satisfy the condition that $k$ and $-k'$ are cotangent to an (affinely parametrized) null geodesics connecting 
$x$ and $y$. $k$ is future directed if $x$ is in the future of $y$ and past directed otherwise. Hence the sum of two such covectors cannot vanish. Therefore the theorem on the multiplicability of distributions applies and yields a distribution on the complement of the diagonal $\{(x,x)|x\in M\}$.
On the diagonal, however, the only restriction is $k=-k'$, hence the sum of the wave front set of $\Delta_S^F$ with itself meets the zero section of the cotangent bundle at the diagonal. 

In general the time-ordered product $\TT_{n}(F_1,\ldots,F_n)\doteq F_1\T\dots\T F_n$ of $n$ local functionals is well defined for local entries as long as supports of $F_1,\ldots,F_n$ are pairwise disjoint. 
The technical problem one now has to solve is the extension of a distribution which is defined outside of a submanifold to an everywhere defined distribution. In the case of QFT on Minkowski space one can exploit translation invariance and reduce the problem in the relative coordinates to the extension problem of a distribution defined outside of the origin in $\RR^n$. The crucial concept for this extension problem is Steinmann's scaling degree \cite{Steinmann}.

\begin{df}
Let $U\subset \RR^n$ be a scale invariant open subset (i.e. $\lambda U=U$ for $\lambda>0$), and let $t\in\Dcal'(U)$ be a distribution on $U$. Let
$t_{\la}(x)=t(\la x)$ be the scaled distribution. The scaling degree $\mathrm{sd}$ of $t$ is 
\be
\mathrm{sd}\,t=\mathrm{inf}\{\delta\in\RR|\lim_{\la\to0}\la^{\delta}t_{\la}=0\} \ .
\ee  
\end{df} 
There is one more important concept related to the scaling degree, namely the degree of divergence. It is defined as:
\[
\mathrm{div}(t)\doteq \mathrm{sd}(t)-n\,.
\]
\begin{thm}\label{extension}
Let $t\in \Dcal(\RR^n\setminus\{0\})$ with scaling degree $\mathrm{sd}\,t<\infty$. Then there exists an extension of $t$ to an everywhere defined distribution with the same scaling degree. The extension is unique up to the addition of a derivative $P(\partial)\delta$ of the delta function, where $P$ is a polynomial with degree bounded by 
$\mathrm{div}(t)$ (hence vanishes for $\mathrm{sd}\,t<n$).  
\end{thm}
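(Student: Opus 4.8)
The plan is to carry out the Epstein--Glaser extension procedure: upgrade the hypothesis $\sd\,t=s<\infty$ to a uniform estimate on dyadic shells around the origin, build the extension by subtracting from each test function a cut-off Taylor polynomial of the appropriate order so that the remainder can be paired with $t$ through a convergent limit, and finally read off the ambiguity from the structure of distributions supported at a point. First I would establish the shell estimate. Since for every $\de>s$ the net $\la^{\de}t_{\la}$ tends to $0$ in $\Dcal'(\RR^{n}\setminus\{0\})$ as $\la\downarrow 0$, its restriction to the fixed shell $A=\{\tfrac12\le|x|\le 2\}$ is a pointwise bounded family of functionals on the Fr\'echet space $\Dcal(A)$; by the Banach--Steinhaus theorem there are an integer $N$ (the local order of $t$ near $A$) and, for each $\de>s$, a constant $C_{\de}$ with $|\langle t_{\la},\psi\rangle|\le C_{\de}\la^{-\de}p_{N}(\psi)$ for $\psi\in\Dcal(A)$, where $p_{N}(\psi)\doteq\sup_{|\al|\le N}\|\pa^{\al}\psi\|_{\infty}$. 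Writing $\psi_{\la}(x)\doteq\psi(x/\la)$, supported in the shell of scale $\la$, this rescales to
\[
|\langle t,\psi_{\la}\rangle|=\la^{n}|\langle t_{\la},\psi\rangle|\le C_{\de}\,\la^{\,n-\de}\,p_{N}(\psi)\,.
\]
This single inequality carries the analytic content, and obtaining it is the first main obstacle; the rest is bookkeeping of exponents.

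Next I construct an extension. Fix $w\in\Dcal(\RR^{n})$ with $w\equiv1$ near $0$, fix $\chi\in\Ci(\RR^{n})$ with $\chi\equiv0$ on $\{|x|\le\tfrac12\}$ and $\chi\equiv1$ on $\{|x|\ge1\}$, put $\chi_{\la}(x)=\chi(x/\la)$, and let $k$ be the largest integer $\le\mathrm{div}(t)=s-n$ (for $k<0$ no subtraction is performed). Define $W\colon\Dcal(\RR^{n})\to\Dcal(\RR^{n})$ by
\[
(Wf)(x)=f(x)-w(x)\sum_{|\al|\le k}\frac{x^{\al}}{\al!}(\pa^{\al}f)(0)\,,
\]
so that $Wf$ vanishes to order $k+1$ at the origin, and set $\langle\bar t,f\rangle\doteq\lim_{j\to\infty}\langle t,\chi_{2^{-j}}(Wf)\rangle$. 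Writing the increment as $(\chi_{2^{-j-1}}-\chi_{2^{-j}})(x)(Wf)(x)=\psi_{j}(x/2^{-j})$ with $\psi_{j}(y)=(\chi(2y)-\chi(y))(Wf)(2^{-j}y)$, Taylor's theorem gives $p_{N}(\psi_{j})\le C\,2^{-j(k+1)}\|f\|_{C^{M}}$ for a suitable $M=M(k,N)$, and the estimate above yields $|\langle t,(\chi_{2^{-j-1}}-\chi_{2^{-j}})(Wf)\rangle|\le C_{\de}\,2^{-j(n-\de+k+1)}\|f\|_{C^{M}}$. Because $k+1>\mathrm{div}(t)$ one may choose $\de\in(s,\,n+k+1)$, so the exponent $n-\de+k+1$ is positive, the telescoping series converges absolutely, and the same bound shows independence of the dyadic sequence and of $\chi$, continuity of $f\mapsto\langle\bar t,f\rangle$, hence $\bar t\in\Dcal'(\RR^{n})$. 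If $0\notin\supp f$ then every $\pa^{\al}f(0)$ vanishes, so $Wf=f$ and $\chi_{2^{-j}}f=f$ for large $j$; thus $\bar t$ extends $t$.

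It then remains to check $\sd\,\bar t=\sd\,t$. The inequality $\sd\,\bar t\ge\sd\,t$ is automatic: if $\la^{\de}\bar t_{\la}\to0$ in $\Dcal'(\RR^{n})$ then restricting to $\RR^{n}\setminus\{0\}$ gives $\la^{\de}t_{\la}\to0$, so $\de\ge s$. For the reverse one estimates $\langle\bar t_{\la},f\rangle=\la^{-n}\langle t,W(f(\cdot/\la))\rangle$ by the same dyadic decomposition: on the shell $|x|\sim2^{-j}$ the function $W(f(\cdot/\la))$ has $\Ci$-seminorms of order $O((2^{-j}/\la)^{k+1})$ when $2^{-j}\lesssim\la$ and of order $O((2^{-j}/\la)^{k})$, coming from the cut-off Taylor polynomial, when $2^{-j}\gtrsim\la$, with an additional $O(\la^{-k})$ from the finitely many shells $|x|\sim1$ on which $w$ varies. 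Inserting these into the shell estimate and summing geometric series gives $|\langle t,W(f(\cdot/\la))\rangle|=O(\la^{-k})+O(\la^{\,n-\de})$, which for $\de$ taken close enough to $s$ is $o(\la^{\,n-\de'})$ for every $\de'>s$; hence $\la^{\de'}\bar t_{\la}\to0$ and $\sd\,\bar t\le s$. This shell-by-shell seminorm bookkeeping is the second and more tedious obstacle.

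Finally, for uniqueness: if $\bar t$ and $\bar t'$ are extensions of $t$ with $\sd\,\bar t=\sd\,\bar t'=s$, then $u\doteq\bar t-\bar t'$ is supported in $\{0\}$, so $u=\sum_{|\al|\le m}c_{\al}\pa^{\al}\de$ by the structure theorem. Using $(\pa^{\al}\de)_{\la}=\la^{-n-|\al|}\pa^{\al}\de$ and the linear independence of the $\pa^{\al}\de$ of a given order, one gets $\sd\,u=n+\max\{|\al|:c_{\al}\ne0\}$; since also $\sd\,u\le\max(\sd\,\bar t,\sd\,\bar t')=s$, this forces $\max|\al|\le s-n=\mathrm{div}(t)$, i.e.\ $u=P(\pa)\de$ with $\deg P\le\mathrm{div}(t)$. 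Conversely, for any such $P$ the distribution $\bar t+P(\pa)\de$ is again an extension, and $\sd(P(\pa)\de)=n+\deg P\le s$ shows that it still has scaling degree $s$. When $\mathrm{div}(t)<0$ only $P=0$ is admissible, which gives the uniqueness asserted in that range; the algebraic part here is immediate once the scaling behaviour of $\pa^{\al}\de$ is recorded.
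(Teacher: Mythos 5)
Your proposal is correct, and it is essentially the same argument the paper relies on: the paper gives no proof of its own but refers to \cite{DF04}, and the proof there (reflected also in the paper's later discussion of $W$-projections in Proposition 7.4) proceeds exactly as you do — a Banach--Steinhaus scaling estimate on annuli, extension via $\langle t, Wf\rangle$ with a cut-off Taylor subtraction of order $\mathrm{div}(t)$ whose convergence uses $k+1>\mathrm{div}(t)$, and uniqueness from the fact that a point-supported difference $\sum c_\alpha\partial^\alpha\delta$ has scaling degree $n+\max|\alpha|$. So there is nothing genuinely different to compare.
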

A proof may be found in \cite{DF04}. In the example above the scaling degree of $\Delta_S^F(x)^2$ is 4 (in 4 dimensions). Hence the extension exists and is unique up to the addition of a multiple of the delta function.

The theorem above replaces the cumbersome estimates on conditional convergence of Feynman integrals on Minkowski momentum space. Often this convergence is not proven at all, instead the convergence of the corresponding integrals on momentum space with euclidean signature is shown. The transition to Minkowski signature is then made after the integration. This amounts not to a computation but merely to a definition of the originally undefined Minkowski space integral.

The generalization of the theorem on the extension of distributions to the situation met on curved spacetimes is due to Brunetti and one of us (K.F.) \cite{BF0}. It uses techniques of microlocal analysis to reduce the general situation to the case covered by the theorem above.

The construction of time ordered products is then performed in the following way ({\it causal perturbation theory}). One searches for a family $(\TT_n)_{n\in\NN_0}$ of $n$-linear symmetric maps from local functionals to microcausal functionals subject to the following conditions:
\begin{enumerate}[{\bf T 1.}]
\item $\TT_0=1$,
\item $\TT_1=\mathrm{id}$ (for curved spacetime one should rather choose $\TT_1=e^{\Gamma_w}$, see the discussion at the end of the section \ref{deformation}),
\item $\TT_n(F_1,\dots,F_n)=\TT_k(F_1,\dots,F_k)\star \TT_{n-k}(F_{k+1},\dots,F_n)$ if the supports $\supp F_i$, $i=1,\dots,k$ of the first $k$ entries do not intersect the past of the supports $\supp F_j$, $j=k+1,\dots,n$ of the last $n-k$ entries (\textit{causal factorisation property}).\label{causality}
\end{enumerate} 
The construction proceeds by induction: when the first $n$ maps $\TT_k$, $k=0,\dots,n $ have been determined, the map $\TT_{n+1}$ is determined up to an $(n+1)$-linear map $Z_{n+1}$ from local functionals to local functionals. This ambiguity corresponds directly to the freedom of adding finite counterterms in every order in perturbation theory.

The general result can be conveniently formulated in terms of the formal S-matrix, defined as the generating function of time ordered products,
\be
\Scal(V)=\sum_{n=0}^{\infty}\frac{1}{n!}\TT_n(V,\dots,V) \ .
\ee 
Then the S-matrix $\hat{\Scal}$ with respect to an other sequence of time ordered products is related to $\Scal$ by
\be
\hat{\Scal}=\Scal\circ Z
\ee
where $Z$ maps local functionals to local functionals, is analytic with vanishing zero order term and with the first order term being the identity. The maps $Z$ form the renormalization group in the sense of Petermann and St\"uckelberg. They are formal diffeomorphisms on the space of local functionals and describe the allowed finite renormalization.

In order to illustrate the methods described above we work out the combinatorics in terms of Feynman diagrams (graphs). 
Let $D$ be the second order functional differential operator $D=\frac{i\hbar}{2}\langle\Delta_S^F,\frac{\delta^2}{\delta\ph^2}\rangle$.
The time ordered product of $n$ factors is formally given by
\[F_1\cdot_\TT\dots F_n\equiv\TT_n(F_1,\dots F_n)=e^{\frac12D}(e^{-\frac12 D}F_1\cdot\dots e^{-\frac12D}F_n)\]
Using Leibniz' rule and the fact that $D$ is of second order we find
\be\label{expD}
(F_1\cdot_\TT\dots F_n)(\ph)=e^{\sum_{i<j}D_{ij}}F_1(\ph_1)\cdots F_n(\ph_n)|_{\ph_1=\dots\ph_n=\ph}
\ee
with $D_{ij}=i\hbar\langle\Delta_S^F,\frac{\delta^2}{\delta\ph_i\delta\ph_j}\rangle$.
The expansion of the exponential function of the differential operator yields
\be\label{expDij}
e^{\sum_{i<j}D_{ij}}=\prod_{i<j}\sum_{l_{ij}=0}^{\infty}\frac{D_{ij}^{l_{ij}}}{l_{ij}!}
\ee
The right hand side may now be written as a sum over all graphs $\Gamma$ with vertices $V(\Gamma)=\{1,\dots,n\}$ and $l_{ij}$ lines $e\in E(\Gamma)$ connecting the vertices $i$ and $j$.
We set $l_{ij}=l_{ji}$ for $i>j$ and $l_{ii}=0$ (no tadpoles). If $e$ connects $i$ and $j$ we set $\partial e:=\{i,j\}$.
Then we obtain
\be\label{time:ord}
\TT_n=\sum_{\Gamma\in G_n}\TT_{\Gamma}
\ee
with $G_n$ the set of all graphs with vertices $\{1,\dots n\}$ and 
\(\TT_{\Gamma}=\frac{1}{\textrm{Sym}(\Gamma)}\langle\widetilde{S}_{\Gamma},\delta_{\Gamma}\rangle\)
where 
\[\widetilde{S}_{\Gamma}=\prod_{e\in E(\Gamma)}\Delta^F_S(x_{e,i},i\in\partial e)\]
\[\delta_{\Gamma}=\frac{\delta^{|E(\Gamma)|}}{\prod_{i\in V(\Gamma)}\prod_{e:i\in\partial e}\delta\ph_i(x_{e,i})}|_{\ph_1=\dots=\ph_n}\]
and the symmetry factor is $\textrm{Sym}(\Gamma)=\prod_{i<j}l_{ij}!$. Note that $\widetilde{S}_{\Gamma}$ is a well-defined distribution in $\Dcal'((\M^2\backslash\Diag )^{\left|E(\Gamma)\right|})$ ($\Diag$ denotes the thin diagonal) that can be uniquely extended to $\Dcal'(\M^{2\left|E(\Gamma)\right|})$,
since the Feynman fundamental solution has a unique extension with the same scaling degree.
 More explicitly we can write \eqref{time:ord} as:
\be\label{time:ord2}
\TT_n(F_1,\dots, F_n)=\sum_{\Gamma\in G_n}\frac{1}{\textrm{Sym}(\Gamma)}\langle\widetilde{S}_{\Gamma},\delta_{\Gamma}(F_1,\dots, F_n)\rangle
\ee

Graphically we represent $F$ with a vertex $\FD$ and   $D_{ij}$ with a dumbbell ${}_i\dumbbell{}_j$, so each empty circle corresponds to a functional derivative. Applying the derivative on a  functional can be pictorially represented as filling the circle with the vertex. Note that the expansion in graphs is possible due to the fact that the action used as a starting point is quadratic, so $D$ is a second order differential operator. If it were of order $k>2$, instead of lines we would have had to use $k-1$ simplices to represent it. Let us illustrate the concepts which we introduced here on a simple example.
\begin{exa}[removing tadpoles]
Let us look at the definition of the time ordered product of $F$ and $G$ in low orders in $\hbar$.
We can write $D(F\cdot G)$ diagramatically as:
\begin{align}
\frac{1}{i\hbar}D\left(F\cdot G\right) & =\left\langle \De_S^F,F^{\left(2\right)}\right\rangle G+F\left\langle \De_S^F,G^{\left(2\right)}\right\rangle +2\left\langle \De_S^F,F^{\left(1\right)}\otimes G^{\left(1\right)}\right\rangle \label{eq:TadpoleLeibnizRule}\\
 & =\FoneFG+\hspace{-1em}\FGoneG+2\FoneG\nonumber \end{align}
Here we see that the \textit{tadpoles} are present. The lowest order contributions to $e^{-\frac12D}F$ can be written as:
\[
e^{-\frac12D}F=F-\tfrac{1}{2}DF+\mathcal{O}(\hbar^{2})=\FD-\hbar\FFlop+\mathcal{O}(\hbar^{2})\,.
\]
Now we write the expression for $F\T G$ up to the first order in $\hbar$:
\[
\left(1+\tfrac{1}{2}D\right)\left[\left(1-\tfrac{1}{2}D\right)F\cdot\left(1-\tfrac{1}{2}D\right)G\right]=\FG+\hbar\FoneG+\mathcal{O}(\hbar^{2}).
\]
All the loop terms cancel out. We can see that applying $e^{-\frac12D}$ on $G$ and $F$ reflects what is called in physics ``removing the tadpoles''. In formula \eqref{expDij} it is reflected by the fact that we set $l_{ii}=0$.
\end{exa}

As long as the formula \eqref{expD} is applied to regular functionals there is no problem, since their functional derivatives are by definition test functions. But the relevant functionals are the interaction Lagrangians which are local functionals and therefore have derivatives with support on the thin diagonal, hence all but the first derivative are singular. As a typical example consider
\[F(\ph)=\int dz f(z)\frac{\ph(z)^k}{k!}\ .\] 
Its derivatives are
\be\label{integral:rep}
F^{(l)}[\ph](x_1,\dots,x_l)=\int dz f(z)\frac{\ph(z)^{k-l}}{(k-l)!}\prod_i\delta(z-x_i)\ .
\ee
In general, the functional derivatives of a local functional have the form
\[F^{(l)}[\ph](x_1,\dots,x_l)=\int dz \sum_jf_j[\ph](z)p_j(\partial_{x_1},\dots,\partial_{x_l})\prod_{i=1}^l\delta(z-x_i)\]
with polynomials $p_j$ and $\ph$-dependent test functions $f_j[\ph]$. The integral representation above is not unique since one can add total derivatives.
This amounts to the relation
\be\label{deriv:cond0}
\int dz q(\partial_z)f(z)p(\partial_{x_1},\dots,\partial_{x_l})\prod_i\delta(z-x_i)=\int dz f(z)q(\partial_{x_1}+\dots\partial_{x_l})p(\partial_{x_1},\dots,\partial_{x_l})\prod_i\delta(z-x_i)\ .
\ee

We insert the integral representation \eqref{integral:rep} into the formula \eqref{time:ord2} for the time ordered product and in each term we obtain:
\[
\langle\widetilde{S}_{\Gamma},\delta_{\Gamma}(F_1,\dots, F_n)\rangle=\int\!\! d\vec{x} d\vec{z}\smashoperator[r]{\prod\limits_{v\in V(\Ga)}}\Big( \sum\limits_{j_v}f^{v}_{j_v}[\ph](z_v)p_{j_v}(\partial_{x_{e,v}}|v\in\partial e)\prod\limits_{\mathclap{e: v\in\partial e}}^{\al_v}\de^4(z_v-x_{e,v})\Big)\widetilde{S}_{\Gamma}\,,
\]
where $\al_v$ is the number of lines adjacent at vertex $v$ and we use the notation $\vec{x}=(x_{e,v}|e\in E(\Ga),v\in\partial e)$, $\vec{z}=(z_v|v\in V(\Ga))$. 
We can move the partial derivatives $\partial_{x_{e,v}}$ by formal partial integration to the distribution $\widetilde{S}_{\Gamma}$. Next we integrate over the delta distributions, which  amounts to the pullback of a derivative of $\widetilde{S}_{\Ga}$ with respect to the 
map $\rho_{\Ga}:\M^{|V(\Ga)|}\rightarrow \M^{2|E(\Ga)|}$ given by the prescription
\[
(\rho_{\Ga}(z))_{e,v}=z_v\,\quad\mathrm{if}\,v\in\partial e\,.
\]
Let $p$ be a polynomial in the derivatives with respect to the partial derivatives $\partial_{x_{e,v}},v\in\partial e$. The pullback $\rho_{\Ga}^*$ of $p\widetilde{S}_{\Ga}$ is well defined on
$\M^{\left|V(\Gamma)\right|}\backslash\DIAG$, where $\DIAG$ is the large diagonal:
 \[
\DIAG=\left\{ z\in\M^{\left|V(\Gamma)\right|}|\,\exists v,w\in V(\Gamma),v\neq w:\, z_{v}=z_{w}\right\} \,.
\]
The problem of renormalization now amounts to finding the extensions of $\rho_{\Ga}^*p\widetilde{S}_{\Ga}$ to everywhere defined distributions $S_{\Ga,p}\in\Dcal'(\M^{\left|V(\Gamma)\right|})$
which depend linearly on $p$. These extensions must satisfy the relation
\be\label{deriv:cond}
\partial_{z_v}S_{\Ga,p}=S_{\Ga,(\sum_{e}\partial_{x_{e,v}})p}
\ee
We present now the inductive procedure of Epstein and Glaser that allows to define the desired extension of $\rho_{\Ga}^*p\widetilde{S}_{\Ga}$. For the simplicity of notation we first consider the case where no derivative couplings are present. 

Let us define an \emph{Epstein-Glaser subgraph (EG subgraph)} $\gamma\subseteq\Gamma$
to be a subset of the set of vertices $V(\gamma)\subseteq V(\Gamma)$
together with all lines in $\Gamma$ connecting them,\[
E(\gamma)=\left\{ e\in E(\Gamma):\partial e \subset V(\gamma)\right\} .\]
The first
step of the Epstein-Glaser induction is to choose extensions for all
EG subgraphs with two vertices, $\left|V(\gamma)\right|=2$. In this
case we have translation invariant distributions in $\Dcal'(\M^{2}\backslash\Diag)$, which correspond in relative coordinates to generic distributions  $\widetilde{t}_\gamma$  in
$\Dcal'(\M\backslash\left\{ 0\right\} )$. 
The scaling degree of these distributions is given by $\left|E(\gamma)\right|\left(d-2\right)$,
and we can choose a (possibly unique) extension according to Theorem~\ref{extension}.
By translation invariance this gives extensions $t_{\gamma}\in\Dcal'(\M^{2})$. 

Now we come to the induction step. For a generic EG subgraph $\gamma\subseteq\Gamma$ with $n$ vertices we assume
that the extensions of distributions corresponding to all EG subgraphs of $\gamma$ with less than $n$ vertices
have already been chosen. The causality condition \ref{causality}
then gives a translation invariant distribution in $\Dcal'(\M^{\left|V(\gamma)\right|}\backslash\Diag)$
which corresponds to a generic distribution $\widetilde{t}_{\gamma}\in\Dcal'(\M^{\left|V(\gamma)\right|-1}\backslash\left\{ 0\right\} )$.
The scaling degree and hence the degree of divergence of this distribution
is completely fixed by the structure of the graph:
\begin{equation}
\mathrm{div}(\gamma)=\left|E(\gamma)\right|\left(d-2\right)-\left(\left|V(\gamma)\right|-1\right)d\,,\qquad d=\dim(\M)\,.\label{eq:DegreeOfDivergenceGraph}\end{equation}
We call $\gamma$ superficially convergent if $\mathrm{div}(\gamma)<0$, logarithmically
divergent if  $\mathrm{div}(\gamma)=0$ and divergent of degree
$\mathrm{div}(\gamma)$ otherwise. Again by Theorem~\ref{extension}
there is a choice to be made in the extension of $\widetilde{t}_{\gamma}$
in the case $\mathrm{div}(\gamma)\geq0$.  

Let us now come back to the case where derivative couplings are present.
The scaling degree of $p\widetilde{S}_{\Ga}$ fulfills:
\[
\mathrm{sd}(p\widetilde{S}_{\Ga})\leq\mathrm{sd}(\widetilde{S_{\Gamma}})+|p|\,,
\]
where $|p|$ is the degree of the polynomial $p$. We can see that $p$ encodes the derivative couplings appearing in the graph $\gamma$. In the framework of Connes-Kreimer Hopf algebras it is called the \textit{external structure of the graph}. The presence of derivative couplings introduces an additional freedom in the choice of the extension in each step of the Epstein-Glaser induction and one has to use it to fulfill \eqref{deriv:cond}. This relation follows basically from the Action Ward Identity, as discussed in \cite{DF07,DF02}. It can be also seen as a consistency condition implementing the Leibniz rule, see \cite{HW5}.

Let us now remark on the relation of the  Epstein-Glaser induction  to a more conventional approach to renormalization. Firstly we show, how the EG renormalization relates to the regularization procedure.  We are given an EG subgraph $\gamma$ with $n$ vertices and we assume that all the subgraphs with $n-1$ vertices are already renormalized. Let
\be
\Dcal_\lambda(\M^{n-1}):=\{f\in\Dcal(\M^{n-1})\,|\,(\partial^\alpha f)(0)=0\,\,\,\forall |\alpha|\leq\lambda\}
\ee
be the space of functions with derivatives vanishing up to order $\lambda$
and let $\Dcal'_\lambda(\M^{n-1})$ be the corresponding space of distributions. Theorem \ref{extension} tells us that the distribution  $\widetilde{t}_{\gamma}\in\Dcal'(\M^{n-1})$ associated with the EG subgraph $\gamma$ has a unique extension to an element of  $\Dcal'_{\mathrm{div}(\gamma)}(\M^{n-1})$. An extension to a distribution on the full space $\Dcal(\M^{n-1})$ can be therefore defined by a choice of the projection:
\[
W:\Dcal(\M^{n-1})\rightarrow \Dcal_{\mathrm{div}(\gamma)}(\M^{n-1})\,.
\]
There is a result proven in \cite{DF04}, which characterizes all such projections:
\begin{prop}\label{lem:W-Projection-Functions}
There is a one-to-one correspondence between families of functions
\begin{equation}
\left\{ w_{\alpha}\in\Dcal\,|\quad\forall\left|\beta\right|\leq\lambda:\partial^{\beta}w_{\alpha}(0)=\delta_{\alpha}^{\beta},\,\left|\alpha\right|\leq\lambda\right\} \label{eq:LemmaProjectionWFunctions}\end{equation}
and projections $W:\Dcal\rightarrow\Dcal_{\lambda}$. The set (\ref{eq:LemmaProjectionWFunctions})
defines a projection $W$ by
\be\label{W:proj}
Wf:=f-\sum_{\left|\alpha\right|\leq\lambda}f^{\left(\alpha\right)}(0)\, w_{\alpha}\,.
\ee
Conversely a set of functions of the form (\ref{eq:LemmaProjectionWFunctions})
is given by any basis of $\mathrm{ran}(1-W)$ dual to the basis $\left\{ \delta^{\left(\alpha\right)}:\left|\alpha\right|\leq\lambda\right\} $
of $\Dcal_{\lambda}^{\perp}\subset\Dcal'$.
\end{prop}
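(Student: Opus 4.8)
The plan is to turn the statement into elementary linear algebra of the finite-codimensional subspace $\Dcal_\lambda=\Dcal_\lambda(\M^{n-1})$, the only analytic input being the classification of distributions supported at a point. Let $N$ be the number of multi-indices $\alpha$ with $|\alpha|\le\lambda$ and put $\mathrm{ev}_\alpha(f)\doteq\partial^\alpha f(0)$, a continuous linear functional on $\Dcal(\M^{n-1})$. The preliminary facts I would establish are: the $\mathrm{ev}_\alpha$, $|\alpha|\le\lambda$, are linearly independent; $\Dcal_\lambda=\bigcap_{|\alpha|\le\lambda}\ke\,\mathrm{ev}_\alpha$; the jet map $\Phi\doteq(\mathrm{ev}_\alpha)_{|\alpha|\le\lambda}\colon\Dcal(\M^{n-1})\to\CC^N$ is surjective (given $(c_\alpha)$, take $f(x)=\chi(x)\sum_\alpha\tfrac{c_\alpha}{\alpha!}x^\alpha$ with $\chi$ a cutoff $\equiv 1$ near $0$); hence $\Dcal_\lambda$ is closed of codimension exactly $N$; and $\Dcal_\lambda^{\perp}=\mathrm{span}\{\delta^{(\alpha)}:|\alpha|\le\lambda\}$, since any distribution annihilating $\Dcal_\lambda$ is supported at $0$ (test functions vanishing near $0$ lie in $\Dcal_\lambda$), hence a finite combination of derivatives of $\delta$, and testing against elements of $\Dcal_\lambda$ with prescribed higher jets forces the orders to be $\le\lambda$. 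Note $\delta^{(\alpha)}=(-1)^{|\alpha|}\mathrm{ev}_\alpha$.

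For the direction ``family $\Rightarrow$ projection'', given $\{w_\alpha\}$ as in \eqref{eq:LemmaProjectionWFunctions} I would define $W$ by \eqref{W:proj}. It is continuous and linear because the $\mathrm{ev}_\alpha$ are and the $w_\alpha$ are fixed; applying $\mathrm{ev}_\beta$ to \eqref{W:proj} and using $\partial^\beta w_\alpha(0)=\delta^\beta_\alpha$ gives $\mathrm{ev}_\beta(Wf)=0$ for all $|\beta|\le\lambda$, so $W(\Dcal)\subseteq\Dcal_\lambda$; and for $f\in\Dcal_\lambda$ every $\mathrm{ev}_\alpha(f)$ vanishes, so $Wf=f$. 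Thus $W$ is a continuous projection onto $\Dcal_\lambda$, and from \eqref{W:proj} one reads off $\mathrm{ran}(1-W)=\mathrm{span}\{w_\alpha\}$, the $w_\alpha$ being independent by biorthogonality.

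For the converse, let $W$ be any continuous projection onto $\Dcal_\lambda$. Then $\ke W=\mathrm{ran}(1-W)$ is an algebraic complement of $\Dcal_\lambda$, hence $N$-dimensional, and the restriction of $\Phi$ to it is injective, so an isomorphism onto $\CC^N$; therefore there is a unique family $\{w_\alpha\}\subset\mathrm{ran}(1-W)$ biorthogonal to the $\mathrm{ev}_\alpha$, i.e.\ with $\partial^\beta w_\alpha(0)=\delta^\beta_\alpha$, which is precisely a family of the form \eqref{eq:LemmaProjectionWFunctions}; equivalently it is the basis of $\mathrm{ran}(1-W)$ dual (up to the signs $(-1)^{|\alpha|}$) to the basis $\{\delta^{(\alpha)}\}$ of $\Dcal_\lambda^\perp$. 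To see the two assignments are mutually inverse: starting from $\{w_\alpha\}$, the projection built via \eqref{W:proj} has $\mathrm{ran}(1-W)=\mathrm{span}\{w_\alpha\}$, and by uniqueness of the biorthogonal family the recovered one is $\{w_\alpha\}$ itself; starting from $W$, the operator $\widetilde Wf\doteq f-\sum_{|\alpha|\le\lambda}f^{(\alpha)}(0)\,w_\alpha$ is, by the first direction, a projection onto $\Dcal_\lambda$ with kernel $\mathrm{span}\{w_\alpha\}=\mathrm{ran}(1-W)=\ke W$, and two projections onto the same subspace with the same kernel coincide, so $\widetilde W=W$.

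The only step that needs genuine care — the ``main obstacle'', modest as it is — is the structural preliminary of the first paragraph: that $\Dcal_\lambda(\M^{n-1})$ is closed of finite codimension $N$ with $\Dcal_\lambda^{\perp}$ spanned by the $\delta^{(\alpha)}$, $|\alpha|\le\lambda$. This rests on the surjectivity of the Taylor-jet map (the cutoff-polynomial construction above) and on the classification of point-supported distributions; once these are in hand, everything else is finite-dimensional linear algebra about complements and biorthogonal systems, with continuity of $W$ entering only through continuity of the $\mathrm{ev}_\alpha$ in \eqref{W:proj}.
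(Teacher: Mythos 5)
Your argument is correct and complete. Note that the paper itself does not prove Proposition \ref{lem:W-Projection-Functions} at all: it is quoted as a result taken from \cite{DF04}, so there is no in-text proof to compare against; what you supply is a self-contained substitute. Your route is the expected one and works: the only genuinely analytic inputs are the surjectivity of the truncated jet map $f\mapsto(\partial^\alpha f(0))_{|\alpha|\leq\lambda}$ (via the cutoff-polynomial construction) and the classification of distributions supported at a point, which together give that $\Dcal_\lambda$ is closed of finite codimension $N$ with $\Dcal_\lambda^{\perp}=\mathrm{span}\{\delta^{(\alpha)}:|\alpha|\leq\lambda\}$; after that, the correspondence is finite-dimensional linear algebra on the complement $\ke W=\mathrm{ran}(1-W)$, with the biorthogonal family existing and being unique because the jet map restricts to an isomorphism of $\ke W$ onto $\CC^N$, and with the two assignments mutually inverse because a projection is determined by its range and kernel. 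Two of your side remarks are worth keeping: you correctly flag that ``dual to $\{\delta^{(\alpha)}\}$'' holds only up to the signs $(-1)^{|\alpha|}$ unless one reads $\delta^{(\alpha)}$ as the functional $f\mapsto f^{(\alpha)}(0)$, a point the statement glosses over; and your converse argument never really uses continuity of $W$, so it shows in passing that every algebraic projection onto $\Dcal_\lambda$ is automatically of the form \eqref{W:proj} and hence continuous, which is slightly more than the statement asks for.
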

Let us now define, following \cite{Kai}, what we mean by a regularization of a distribution.
\begin{df}[Regularization]\label{df:regularisation} Let $\tilde{t}\in\Dcal'(\RR^n\setminus\{0\})$ be a distribution with degree of 
divergence $\lambda$, and let $\bar{t}\in\Dcal_\lambda'(\RR^n)$ be the unique extension of $\tilde{t}$ with the same degree of divergence. A family of distributions $\{t^\zeta\}_{\zeta\in\Omega\setminus\{0\}}$, $t^\zeta\in\Dcal'(\RR^n)$, with $\Omega\subset\CC$ a neighborhood of the origin, is called a regularization of $\tilde{t}$, if
\be\label{eq:regularization}
\forall g\in\Dcal_\lambda(\RR^n):\quad\lim_{\zeta\rightarrow0}\langle t^\zeta,g\rangle=\langle \bar{t},g\rangle\,.
\ee
The regularization $\{t^\zeta\}$ is called analytic, if for all functions $f\in\Dcal(\RR^n)$ the map
\be
\Omega\setminus\{0\}\ni\zeta\mapsto \langle t^\zeta,f \rangle
\ee
is analytic with a pole of finite order at the origin. The regularization $\{t^\zeta\}$ is called finite, if 
the limit $\lim_{\zeta\rightarrow 0}\langle t^\zeta,f\rangle\in\CC$ exists $\forall f\in\Dcal(\RR^n)$; 
in this case $\lim_{\zeta\rightarrow0}t^\zeta\in\Dcal'(\RR^n)$ is called an extension or renormalization of $\tilde{t}$.
\end{df}

For a finite regularization the limit
$\lim_{\zeta\rightarrow0}t^\zeta$ is indeed a solution $t$ of the
extension problem. Given a regularization $\{t^\zeta\}$ of $t$, it follows from \eqref{eq:regularization} 
that for any projection $W:\Dcal\rightarrow\Dcal_\lambda$
\be\label{regW-1}
\langle \bar{t},Wf\rangle=\lim_{\zeta\rightarrow0}\langle t^\zeta,Wf\rangle\, \quad \forall f\in\Dcal(\RR^n)\,.
\ee
Any extension $t\in\Dcal'(\RR^n)$ of $\tilde t$ with the same scaling degree is of the form $\langle t,f\rangle=\langle \bar t,Wf\rangle$ with some $W$-projection of the form \eqref{W:proj}.
Since $t^\zeta\in\Dcal'(\RR^n)$ we can write (\ref{regW-1}) in the form
\be\label{regW-2}
\langle \bar{t},Wf\rangle=\lim_{\zeta\rightarrow0}\left[\langle t^\zeta,f\rangle - \sum_{|\al|\leq\sd(t)-n}\langle t^\zeta,w_\al\rangle\; f^{(\al)}(0)\right].
\ee
In general the limit on the right hand side cannot be split, since the limits of the individual terms might not exist. However, if the regularization $\{t^\zeta,\zeta\in\Omega\setminus\{0\}\}$ is analytic, each term can be expanded in a Laurent series around $\zeta=0$, and since the overall limit is finite, the principal parts ($\pp$) of these Laurent series must coincide.
It follows that the principal part of any analytic regularization $\{t^\zeta\}$ of a distribution $t\in\Dcal'(\RR^n\setminus\{0\})$ is a local distribution of order $\sd(t)-n$. We can now give a definition of the minimal subtraction in the EG framework.
\begin{cor}[Minimal Subtraction]\label{cor:MS-same-sd}
The regular part ($\rp=1-\pp$) of any analytic regularization $\{t^\zeta\}$ of a distribution $\tilde{t}\in\Dcal'(\RR^n\setminus\{0\})$ defines by
\be\label{def:MS}
\langle t^\MS,f\rangle :=\lim_{\zeta\rightarrow0} \rp(\langle t^\zeta,f\rangle)
\ee
an extension of $\tilde{t}$ with the same scaling degree, $\sd(t^\MS)=\sd(\tilde{t})$.
The extension $t^\MS$ defined by (\ref{def:MS}) is called ``minimal subtraction''.
\end{cor}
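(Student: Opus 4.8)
The plan is to obtain the corollary almost directly from the discussion preceding it, where it was established that for an analytic regularization of $\tilde{t}$ the principal part of $\zeta\mapsto\langle t^\zeta,f\rangle$ is a \emph{local} distribution of order $\sd(\tilde{t})-n$; write $\la:=\sd(\tilde{t})-n=\mathrm{div}(\tilde{t})$. First I would check that \eqref{def:MS} makes sense: by analyticity $\zeta\mapsto\langle t^\zeta,f\rangle$ is meromorphic near $0$ with a pole of finite order, hence has a Laurent expansion whose regular part $\rp(\langle t^\zeta,f\rangle)$ is holomorphic at $\zeta=0$; then $\langle t^{\MS},f\rangle$ is simply the value of that regular part at $\zeta=0$, i.e. the zeroth Laurent coefficient, which exists for every $f\in\Dcal(\RR^n)$.

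The core step is to take the regular part, evaluated at $\zeta=0$, of the identity \eqref{regW-2}. Fix a $W$-projection $W:\Dcal(\RR^n)\to\Dcal_\la(\RR^n)$ with associated functions $\{w_\al\,:\,|\al|\le\la\}$ as in Proposition~\ref{lem:W-Projection-Functions}, and set $c_\al:=\langle t^{\MS},w_\al\rangle\in\CC$, well defined by the previous step. Since the right-hand side of \eqref{regW-2} has a finite limit (it equals $\langle\bar{t},Wf\rangle$, finite by Theorem~\ref{extension}), the bracketed expression is holomorphic at $\zeta=0$ and therefore equals its own regular part; using linearity of $\rp$ and $\rp(\langle t^\zeta,w_\al\rangle f^{(\al)}(0))=\rp(\langle t^\zeta,w_\al\rangle)\,f^{(\al)}(0)$, evaluating at $\zeta=0$ yields
\[
\langle t^{\MS},f\rangle=\langle\bar{t},Wf\rangle+\sum_{|\al|\le\la}c_\al\,f^{(\al)}(0)\,,\qquad f\in\Dcal(\RR^n)\,.
\]
The legitimacy of splitting the limit in this way is exactly the locality-of-the-principal-part statement recalled above: it forces the principal parts of $\langle t^\zeta,f\rangle$ and of $\sum_\al\langle t^\zeta,w_\al\rangle f^{(\al)}(0)$ to coincide, because $\partial^\beta w_\al(0)=\delta_\al^\beta$ for $|\beta|\le\la$.

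From the displayed formula both assertions follow. The functional $f\mapsto\langle\bar{t},Wf\rangle$ is a bona fide distribution extending $\tilde{t}$ --- for $f$ supported away from $0$ one has $f\in\Dcal_\la(\RR^n)$ and $Wf=f$, so $\langle\bar{t},Wf\rangle=\langle\tilde{t},f\rangle$ --- and by Theorem~\ref{extension} (cf. \cite{DF04}) it has scaling degree $\sd(\tilde{t})$; the remaining term is $\big\langle\sum_{|\al|\le\la}(-1)^{|\al|}c_\al\,\partial^\al\delta,\,f\big\rangle$, a distribution supported at the origin of scaling degree at most $n+\la=\sd(\tilde{t})$. Hence $t^{\MS}\in\Dcal'(\RR^n)$, it restricts to $\tilde{t}$ on $\RR^n\setminus\{0\}$, and $\sd(t^{\MS})\le\sd(\tilde{t})$; since the restriction of a distribution has scaling degree no larger than the distribution itself, also $\sd(\tilde{t})\le\sd(t^{\MS})$, so $\sd(t^{\MS})=\sd(\tilde{t})$, completing the proof.

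I do not anticipate a genuine obstacle: the substantive fact --- that the principal part of an analytic regularization is local of order $\le\sd(\tilde{t})-n$ --- has already been proven, and what is left is the manipulation of \eqref{regW-2} together with elementary scaling-degree estimates for point-supported distributions. The one point deserving a line of care is that $t^{\MS}$ is genuinely a distribution (joint continuity/boundedness in $f$); the displayed identity settles this by exhibiting $t^{\MS}$ as a sum of manifestly continuous functionals, but one could equally invoke sequential completeness of $\Dcal'(\RR^n)$ for the family $\zeta\mapsto t^\zeta-\pp(t^\zeta)$ as $\zeta\to0$.
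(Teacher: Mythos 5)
Your proposal is correct and follows essentially the paper's own (implicit) argument: the corollary is stated there as a direct consequence of the preceding discussion, namely identity \eqref{regW-2} together with the observation that the principal part of an analytic regularization is a local distribution of order at most $\sd(\tilde{t})-n$, which is exactly what you make explicit by taking the regular part of \eqref{regW-2} and bounding the scaling degree of the resulting $\delta$-derivative terms. Your added care about well-definedness of $t^{\MS}$ as a distribution and the two-sided scaling-degree estimate is a faithful fleshing-out rather than a different route.
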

To finish this discussion we want to remark on the difference between the Epstein-Glaser procedure and the BPHZ scheme. It is best seen on the example of the rising sun diagram of the $\ph^4$ theory.  In the framework of BPHZ, it contains three logarithmically divergent subdiagrams, which have to be renormalized first. In the perspective of EG, however, it is a diagram with two vertices and, hence, contains no divergent subdiagram at all. This way one saves some work computing contributions, which, as shown by Zimmermann \cite{Z} cancel out in the end.

We have just seen how to define the $n$-fold time-ordered products (i.e. multilinear maps $\TT_n$) by the procedure of Epstein and Glaser. An interesting question is whether the renormalized time ordered product defined by such a sequence of multilinear maps can be understood as an iterated binary product on a suitable domain. 
Recently we proved in  \cite{FRQ} that this is indeed the case. The crucial observation is that multiplication of local functionals is injective. More precisely, let $\mathfrak{F}_0(M)$ be the set of local functionals vanishing at some distinguished field configuration (say $\ph=0$). Iterated multiplication $m$ is then a linear map from the symmetric Fock space over $\mathfrak{F}_0(M)$ onto the algebra of functionals which is generated by $\mathfrak{F}_0(M)$. Then there holds the following assertion:
\begin{prop}The multiplication $m:S^\bullet\mathfrak{F}_0(M)\to\F(M)$ is bijective (where $S^k$ denotes the symmetrised tensor product of vector spaces).
\end{prop}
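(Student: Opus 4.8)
The plan is to establish surjectivity and injectivity separately. Surjectivity is immediate from the definition: here $\mathfrak{F}(M)$ \emph{is} the subalgebra of the pointwise-product algebra generated by $\mathfrak{F}_0(M)$, so every element is a finite sum of pointwise products of elements of $\mathfrak{F}_0(M)$ and hence lies in the image of $m$. The whole content is therefore the injectivity of $m$, which is equivalent to the statement that if $e_1,\dots,e_d$ are linearly independent local functionals vanishing at the distinguished configuration (say $\varphi=0$), then the pointwise monomials $e_{l_1}\cdots e_{l_k}$ are linearly independent in $\mathfrak{F}(M)$; i.e.\ linearly independent local functionals are \emph{algebraically independent} as functions on $\mathfrak{E}(M)$.

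For injectivity I would filter $S^\bullet\mathfrak{F}_0(M)$ by symmetric degree and argue by downward induction. Suppose $m(t)=0$ with $t=\sum_{k=0}^N t_k$, $t_k\in S^k\mathfrak{F}_0(M)$ and $t_N\neq 0$; write $t_N=\sum_i F_{i,1}\odot\cdots\odot F_{i,N}$. Fix an arbitrary $\varphi_0\in\mathfrak{E}(M)$ and smooth compactly supported $\psi_1,\dots,\psi_N$ with pairwise disjoint supports, and apply $\partial_{s_1}\cdots\partial_{s_N}|_{s=0}$ to $m(t)\big(\varphi_0+\sum_m s_m\psi_m\big)$. A local functional $G$ obeys $G\big(\varphi_0+\sum_m s_m\psi_m\big)=G(\varphi_0)+\sum_m\big(G(\varphi_0+s_m\psi_m)-G(\varphi_0)\big)$ — the field-space form of the additivity/thin-diagonal property of local functionals — so for a product $G_1\cdots G_k$, Leibniz together with the fact that applying two distinct $\partial_{s_m},\partial_{s_{m'}}$ to any single-variable summand gives zero forces, term by term, each factor to absorb at most one of the $N$ derivatives. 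By pigeonhole, when $k<N$ every way of distributing the derivatives leaves some factor with two or more of them, so all terms of $t_k$, $k<N$, drop out; only the terms of $t_N$ survive, and one obtains
\[
\partial_{s_1}\cdots\partial_{s_N}\big|_{0}\,m(t)\Big(\varphi_0+\sum_m s_m\psi_m\Big)=\sum_i\mathrm{perm}\big[\langle F_{i,l}^{(1)}(\varphi_0),\psi_m\rangle\big]_{l,m}=0,
\]
valid for every $\varphi_0$ and every such family $(\psi_m)$.

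It remains to show that this system of ``permanent equations'' forces $t_N=0$, and this is the heart of the matter. First, the linear functionals $F\mapsto\langle F^{(1)}(\varphi_0),\psi\rangle$ ($\varphi_0\in\mathfrak{E}(M)$, $\psi$ compactly supported) separate points of $\mathfrak{F}_0(M)$: if they all vanish then $F^{(1)}(\varphi_0)=0$ for all $\varphi_0$, so $F$ is constant, and $F(0)=0$ gives $F=0$. Since the $F_{i,l}$ span a finite-dimensional subspace $V_0\subseteq\mathfrak{F}_0(M)$, one can then choose $\varphi_0$ so that $F^{(1)}(\varphi_0)$, $F\in V_0$, is a faithful linear image of $V_0$ in $\mathcal{E}'(M)$, whence $\tilde t_N(\varphi_0):=\sum_i F_{i,1}^{(1)}(\varphi_0)\odot\cdots\odot F_{i,N}^{(1)}(\varphi_0)$ is a \emph{nonzero} element of $S^N\mathcal{E}'(M)\subseteq\mathcal{E}'(M^N)$. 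The permanent equations say precisely that $\tilde t_N(\varphi_0)$ is annihilated by all $\psi_1\otimes\cdots\otimes\psi_N$ with pairwise disjoint supports, i.e.\ that $\tilde t_N(\varphi_0)$ is supported on the large diagonal of $M^N$ — and we know this for \emph{every} admissible $\varphi_0$. The remaining step is to exclude this: using that each $F^{(1)}(\varphi_0)$ is supported inside the fixed compact $\mathrm{supp}\,F$ and that, by the sheaf-like/locality structure of $\mathfrak{F}_0(M)$, the germs $F^{(1)}(\varphi_0)|_U$ on small open sets still separate $V_0$ as $\varphi_0$ varies, one localizes near a point, places the $\psi_m$ in disjoint small balls on which the restricted first derivatives are ``independent enough'', and produces a nonvanishing permanent — a contradiction. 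Feeding $t_N=0$ back yields $m(\sum_{k<N}t_k)=0$, and the downward induction (base case $N=0$, where $m$ is the identity on constants) completes the proof.

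The step I expect to be the genuine obstacle is the last one: ruling out that $\tilde t_N(\varphi_0)$ is supported on the large diagonal for \emph{all} $\varphi_0$. Abstractly a nonzero symmetric tensor of distributions can be diagonal-supported — e.g.\ $\delta_{x_0}\odot\partial\delta_{x_0}$ — so one must genuinely exploit that the $F^{(1)}_{i,l}(\varphi_0)$ are first derivatives of local functionals with fixed supports and that these supports can be ``resolved'' by jointly varying $\varphi_0$ and localizing in several disjoint regions. Everything else — surjectivity, the additivity identity, the pigeonhole/Leibniz bookkeeping that isolates the top symmetric degree, and the separation of $\mathfrak{F}_0(M)$ by first derivatives — is routine.
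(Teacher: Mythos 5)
The statement is not proved in these notes at all --- it is quoted from \cite{FRQ} --- so your argument has to stand on its own. Most of it does: surjectivity is indeed by definition of $\F(M)$ as the algebra generated by $\mathfrak{F}_0(M)$; the additivity identity $G(\varphi_0+\sum_m s_m\psi_m)=G(\varphi_0)+\sum_m\bigl(G(\varphi_0+s_m\psi_m)-G(\varphi_0)\bigr)$ for disjointly supported $\psi_m$ is correct for local functionals; and the Leibniz/pigeonhole count correctly kills all symmetric degrees below $N$ and yields the permanent identities $\sum_i\mathrm{perm}\bigl[\langle F_{i,l}^{(1)}(\varphi_0),\psi_m\rangle\bigr]=0$ for every $\varphi_0$ and every disjointly supported family. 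Note, moreover, that the obstacle you worry about at the end (nonzero diagonal-supported tensors like $\delta_{x_0}\odot\partial\delta_{x_0}$) is not actually there: the first derivative of a local functional at a smooth configuration is a \emph{smooth} compactly supported density, so $\tilde t_N(\varphi_0)$ is a smooth function on $M^N$, and vanishing against all disjointly supported test functions (i.e.\ off the large diagonal, a nowhere dense set) already forces $\tilde t_N(\varphi_0)\equiv 0$ for every $\varphi_0$ --- no localization gymnastics needed.

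The genuine gap is the one-sentence claim ``one can then choose $\varphi_0$ so that $F\mapsto F^{(1)}(\varphi_0)$ is a faithful linear image of $V_0$''. This does not follow from the separation property you establish: a family of linear maps can separate points of a vector space while no single member is injective on a given two-dimensional subspace (the two coordinate projections of $\RR^2$ already show this). Separation only gives, for each nonzero $G\in V_0$, \emph{some} configuration $\varphi_G$ with $G^{(1)}(\varphi_G)\neq 0$; you need one $\varphi_0$ working simultaneously for all of $V_0\setminus\{0\}$, i.e.\ a nonempty intersection of an uncountable family of open conditions, and you offer no density, compactness, analyticity-in-$\varphi$ or rank-stratification argument to produce it (nor a substitute, such as combining data at several base points or using higher derivatives in the parameters $s_m$). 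That a single faithful $\varphi_0$ exists for local functionals is essentially the algebraic-independence statement at top degree, so the argument is circularly incomplete exactly where you yourself flag the difficulty, and your fallback sketch (``germs still separate'', ``independent enough'') is not a proof. Since $t_N=0$ is the hinge on which the downward induction turns, the proposal as it stands does not establish injectivity; everything surrounding that step is fine.
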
 
Let $\beta=m^{-1}$. We now define the renormalized time ordering operator on the space of multilocal functionals $\F(M)$ by
\be
\TTR:=(\bigoplus_n  \TT_n)\circ\beta 
\ee  
This operator is a formal power series in $\hbar$ starting with the identity, hence it is injective.
The renormalized time ordered product is now defined on the image of $\TTR$ by
\be
A\TRH B\doteq\TTR(\TTR^{\minus}A\cdot\TTR^{\minus}B)\,,
\ee
This product is equivalent to the pointwise product and is in particular associative and commutative. Moreover, the $n$-fold time ordered product of local functionals coincides with the $n$-linear map $\TT_n$ of causal perturbation theory.
\clearpage
\appendix
\section{Distributions and wavefront sets}\label{distr}
We recall same basic notions from the theory of distributions on $\RR^n$. Let $\Omega\subset\RR^n$ be an open subset and $\Ecal(\Omega)\doteq\Ci(\Omega,\RR)$ the space of smooth functions on it. We equip this space with a Fr\'echet topology generated by the family of seminorms:
\be\label{topE}
p_{K,m}(\ph)=\sup_{x\in K\atop |\alpha|\leq m}|\partial^\alpha\ph(x)|\,,
\ee
where $\alpha\in\NN^N$ is a multiindex and $K\subset \Omega$ is a compact set. This is just the topology of uniform convergence on compact sets, of all the derivatives. 

The space of smooth compactly supported functions $\Dcal(\Omega)\doteq\Ci_c(\Omega,\RR)$ can be equipped with a locally convex topology in a similar way. The fundamental system of seminorms is given by \cite{Sch0}:
\be\label{topD}
p_{\{m\},\{\epsilon\},a}(\ph)=\sup_\nu\big(\sup_{|x|\geq\nu,\atop |p|\leq m_\nu} \big|D^p\ph^a(x)\big|/\epsilon_\nu\big)\,,
\ee
where $\{m\}$ is an increasing sequence of positive numbers going to $+\infty$ and $\{\epsilon\}$ is a decreasing one tending to $0$. 

The space of \textbf{\textit{distributions}}\index{distribution} is defined to be the dual  $\Dcal'(\Omega)$ of $\Dcal(\Omega)$ with respect to the topology given by (\ref{topD}). Equivalently, given a linear map $L$ on $\Dcal(\Omega)$ we can decide if it is a distribution by checking one of the equivalent conditions given in the theorem below \cite{Trev,Rud,Hoer}.
\begin{thm}
A linear map $u$ on $\Ecal(\Omega)$ is a distribution if it satisfies the following equivalent conditions:
\begin{enumerate}
\item To every compact subset $K$ of $\Omega$ there exists an integer $m$ and a constant $C>0$ such that for all $\ph\in\Dcal$ with support contained in $K$ it holds:
\[
|u(\ph)|\leq C\max_{p\leq k}\sup_{x\in\Omega}|\pa^p\ph(x)|\,.
\]
We call $||u ||_{\Ccal^k(\Omega)}\doteq\max_{p\leq k}\sup_{x\in\Omega}|\pa^p\ph(x)|$ the $\Ccal^k$-norm and if  the same integer $k$ can be used in all $K$ for a given distribution $u$, then we say that $u$ is of order $k$\index{order of a distribution}.
\item If a sequence of test functions $\{\ph_k\}$, as well as all their derivatives converge uniformly to 0 and if all the test functions $\ph_k$ have their supports contained in a compact subset $K\subset\Omega$ independent of the index $k$, then $u(\ph_k)\rightarrow 0$.
\end{enumerate}
\end{thm}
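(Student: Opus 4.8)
The plan is to read ``$u$ is a distribution'' as ``$u$ is a continuous linear functional on $\Dcal(\Omega)$ for the topology generated by the seminorms \eqref{topD}'', and then to close a ring of equivalences between this continuity, condition (1) and condition (2). The structural input I would invoke at the start is that $\Dcal(\Omega)=\Ci_c(\Omega)$ is the strict inductive limit of the Fréchet spaces $\Dcal_K(\Omega)$ of smooth functions supported in a fixed compact $K\subset\Omega$, each carrying the topology of the increasing family of seminorms $p_{K,m}(\ph)=\sup_{|\alpha|\le m,\,x\in K}|\pa^\alpha\ph(x)|$. Since a linear functional on an inductive limit is continuous precisely when each restriction $u|_{\Dcal_K}$ is continuous, the entire statement localizes to the individual Fréchet spaces $\Dcal_K$, and both conditions (1) and (2) will be shown to be two different repackagings of continuity on each $\Dcal_K$.

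First I would establish continuity $\Leftrightarrow$ (1). On the metrizable space $\Dcal_K$ a linear functional $u$ is continuous iff it is continuous at the origin, which by the standard Fréchet-space criterion means $|u(\ph)|\le C\,p_{K,m}(\ph)$ for a single seminorm; here one exploits that the family $\{p_{K,m}\}_m$ is increasing in $m$, so that a finite supremum of seminorms collapses to one. Ranging over all compact $K$ (allowing $C$ and the order $m$ to depend on $K$) reproduces exactly the local estimate of condition (1), the controlling seminorm being the $\Ccal^m$-norm of $\ph$ on $K$. The implication (1)$\Rightarrow$(2) is then immediate: if the $\ph_k$ are supported in one fixed $K$ and $p_{K,m}(\ph_k)\to0$ for every $m$, the estimate forces $|u(\ph_k)|\le C\,p_{K,m}(\ph_k)\to0$.

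The substance lies in the converse (2)$\Rightarrow$(1), which I would prove by contraposition through a normalized diagonal sequence. If (1) fails for some compact $K$, then for each integer $k$ there is $\ph_k\in\Dcal_K$ with
\[
|u(\ph_k)|>k\,p_{K,k}(\ph_k)\,,
\]
where necessarily $p_{K,k}(\ph_k)\neq0$, since otherwise $\ph_k\equiv0$ and the inequality collapses. Rescaling by $\psi_k=\ph_k/\big(k\,p_{K,k}(\ph_k)\big)$ produces functions still supported in $K$ with $|u(\psi_k)|>1$, while for every fixed $m$ and all $k\ge m$ one has $p_{K,m}(\psi_k)\le p_{K,k}(\psi_k)=1/k\to0$; thus $\psi_k$ and all its derivatives tend uniformly to $0$ with supports in the single compact $K$. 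Condition (2) now gives $u(\psi_k)\to0$, contradicting $|u(\psi_k)|>1$, so (1) must hold. The genuine obstacle I anticipate is conceptual rather than computational: $\Dcal(\Omega)$ itself is not metrizable, so sequential continuity on the full space would be strictly weaker than continuity; what rescues condition (2) is that it is phrased for sequences with supports in a \emph{single} fixed compact $K$, i.e.\ for convergence in the metrizable Fréchet space $\Dcal_K$, and the diagonal argument above is precisely the explicit verification that on such a metrizable space sequential continuity upgrades to the seminorm bound.
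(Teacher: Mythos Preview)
Your argument is correct and complete: the localization to the Fr\'echet spaces $\Dcal_K$ via the inductive-limit description, the seminorm characterization of continuity on each $\Dcal_K$, and the contrapositive diagonal-sequence argument for $(2)\Rightarrow(1)$ are all standard and carried out cleanly. You are also right to flag that the non-metrizability of $\Dcal(\Omega)$ is the only subtle point, and that condition~(2) sidesteps it by fixing a single compact $K$.

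However, there is nothing to compare against: the paper states this theorem without proof, as a background result imported from the standard references \cite{Trev,Rud,Hoer} cited just before it. So your write-up goes strictly beyond what the paper offers here; it supplies the argument that the paper leaves to the literature.
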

An important property of a distribution is its support\index{distribution!support}.  If $U' \subset U$ is an open subset then $\Dcal(U')$ is a closed subspace of $\Dcal(U)$ and there is a natural restriction map $\Dcal'(U) \rightarrow \Dcal'(U')$. We denote the restriction of a distribution $u$ to an open subset $U'$ by $u|_{U'}$.
\begin{df}
The support $\supp u$ of a distribution $u \in \Dcal'(\Omega)$ is the smallest closed set $\Ocal$ such that $u|_{\Omega\setminus \Ocal} = 0$. In other words:
\[
\supp u\doteq \{x\in\Omega|\, \forall U\,\textrm{open neigh. of }x,\, U\subset\Omega\ \exists \ph\in\Dcal(\Omega), \supp\ph\subset U,\,\mathrm{s.t. }<\!u,\ph\!>\neq 0\}\,.
\]
\end{df}
Distributions with compact support\index{distribution!with compact support} can be characterized by means of a following theorem:
\begin{thm}
The set of distributions in $\Omega$ with compact support is identical with the dual $\Ecal'(\Omega)$ of $\Ecal(\Omega)$ with respect to the topology given by (\ref{topE}).
\end{thm}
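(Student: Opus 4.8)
The plan is to realize the asserted identification through the restriction map $r\colon\Ecal'(\Omega)\to\Dcal'(\Omega)$, $r(L)=L|_{\Dcal(\Omega)}$, and to show that $r$ is injective with image exactly the compactly supported distributions. First I would check that $r$ is well defined: the inclusion $\Dcal(\Omega)\hookrightarrow\Ecal(\Omega)$ is continuous because the restriction to $\Dcal(\Omega)$ of each seminorm $p_{K,m}$ from \eqref{topE} is controlled by the $\Dcal$-topology, so every $L\in\Ecal'(\Omega)$ restricts to a distribution. Injectivity of $r$ follows from density of $\Dcal(\Omega)$ in $\Ecal(\Omega)$: given $\ph\in\Ecal(\Omega)$ and an exhaustion of $\Omega$ by compacts $K_n$ with cutoffs $\chi_n\in\Dcal(\Omega)$, $\chi_n\equiv1$ on $K_n$, one has $\chi_n\ph\to\ph$ in the Fr\'echet topology \eqref{topE}, since on any fixed compact $\chi_n\ph$ coincides with $\ph$ for $n$ large; hence a continuous functional on $\Ecal(\Omega)$ is determined by its values on $\Dcal(\Omega)$.

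Next I would identify the image of $r$. For the inclusion $r(\Ecal'(\Omega))\subseteq\{\text{compactly supported distributions}\}$, I use that $\Ecal(\Omega)$ is Fr\'echet with the \emph{directed} family \eqref{topE} (given finitely many $(K_i,m_i)$, the pair $(\bigcup_i K_i,\max_i m_i)$ dominates them all), so continuity of $L$ yields a \emph{single} compact $K\subset\Omega$, an integer $m$, and a constant $C>0$ with $|L(\ph)|\leq C\,p_{K,m}(\ph)$ for all $\ph\in\Ecal(\Omega)$. If $\ph\in\Dcal(\Omega)$ has $\supp\ph\cap K=\varnothing$, then $p_{K,m}(\ph)=0$, whence $(rL)(\ph)=0$; therefore $rL$ vanishes on $\Omega\setminus K$ and $\supp(rL)\subseteq K$ is compact. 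Conversely, to show every compactly supported distribution lies in the image, I would construct an explicit extension: given $u\in\Dcal'(\Omega)$ with $\supp u$ compact, fix $\chi\in\Dcal(\Omega)$ with $\chi\equiv1$ on a neighborhood of $\supp u$ and set $\tilde u(\ph)\doteq u(\chi\ph)$ for $\ph\in\Ecal(\Omega)$. This is independent of the choice of $\chi$, since the difference of two such cutoffs vanishes near $\supp u$, and it extends $u$ because $(\chi-1)\ph$ vanishes near $\supp u$ for $\ph\in\Dcal(\Omega)$. Continuity of $\tilde u$ in the topology \eqref{topE} follows from the order estimate of the first characterization theorem applied to $u$ on the fixed compact $\supp\chi$, combined with the Leibniz rule for $\partial^\alpha(\chi\ph)$, which bounds $|\tilde u(\ph)|$ by a constant times $p_{\supp\chi,\,m}(\ph)$.

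Assembling these steps yields a bijection between $\Ecal'(\Omega)$ and the compactly supported distributions, which is precisely the claimed identification. The main obstacle I anticipate is the image computation: extracting the single-seminorm continuity bound from Fr\'echet continuity (legitimate only because \eqref{topE} is directed) and then reading off the support localization is the genuine content, whereas the cutoff extension and the Leibniz estimate are routine. A secondary point requiring care is verifying that $r$ and the extension $u\mapsto\tilde u$ are mutually inverse: for the composite $L\mapsto r L\mapsto \widetilde{rL}$ one must choose the cutoff $\chi\equiv1$ on a neighborhood of the compact $K$ furnished by the continuity bound (not merely on $\supp(rL)$), so that $(1-\chi)\ph$ is annihilated by $p_{K,m}$ and the extension reproduces $L$ on all of $\Ecal(\Omega)$.
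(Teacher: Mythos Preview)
Your argument is correct and follows the standard textbook route (restriction map, density of $\Dcal$ in $\Ecal$, directedness of the seminorm family to localize support, and extension by cutoff). Note, however, that the paper does not actually prove this theorem: it is stated in the appendix as a classical background result and implicitly referred to the cited references \cite{Hoer,Rud,Trev}, so there is no ``paper's own proof'' to compare against. Your proof is precisely the one found in those sources, and the care you take with the directedness of the family \eqref{topE} and with the mutual-inverse verification is appropriate.
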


Now we discuss the singularity structure of distributions. This is mainly based on \cite{Hoer} and chapter 4 of \cite{BaeF}.
\begin{df}
The singular support\index{singular support} $\mathrm{sing\, supp}\, u$ of $u \in \Dcal'(\Omega)$ is the smallest closed subset $\Ocal$ such that $u|_{\Omega\setminus \Ocal} \in \Ecal(\Omega\setminus \Ocal)$.
\end{df}
We recall an important theorem giving the criterium for a compactly distribution to have an empty singular support:
\begin{thm}
A distribution $u \in \Ecal'(\Omega)$ is smooth if and only if for every $N$ there is a constant $C_N$ such that:
\[
|\hat{u}({k} )| \leq C_N (1 + |{k} |)^{-N}\,,
\]
where $\hat{u}$ denotes the Fourier transform of $u$.
\end{thm}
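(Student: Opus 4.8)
The statement is the classical characterization of smoothness of a compactly supported distribution by rapid decay of its Fourier transform, so the plan is to prove the two implications separately, using only the interplay between differentiation and the Fourier transform together with Fourier inversion. The first thing I would record is that for $u\in\Ecal'(\Omega)$ the Fourier transform is a genuine smooth function, namely $\hat u(k)=\langle u, e^{-i\langle\cdot,k\rangle}\rangle$: since $u$ has compact support it pairs with the smooth function $x\mapsto e^{-i\langle x,k\rangle}$, and smoothness (in fact analyticity) in $k$ follows by differentiating under the pairing, the difference quotients converging in $\Ecal(\Omega)$. One also notes from the order estimate for $u$ that $\hat u$ is polynomially bounded, which is the baseline against which the rapid-decay hypothesis is a genuine strengthening.

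For the forward implication, assume $u\in\Ci_c(\Omega)$. The key identity is $k^\alpha\hat u(k)=\langle(-i\pa)^\alpha u, e^{-i\langle\cdot,k\rangle}\rangle$, obtained by integration by parts (no boundary terms, by compact support). Since $(-i\pa)^\alpha u$ is continuous with compact support, the right-hand side is bounded in $k$ by $\|\pa^\alpha u\|_{L^1}$. Thus for each multi-index $\alpha$ there is a constant with $|k^\alpha\hat u(k)|\le C_\alpha$; choosing finitely many $\alpha$ with $|\alpha|\le N$ and summing gives the bound $|\hat u(k)|\le C_N(1+|k|)^{-N}$ for every $N$.

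For the converse, assume the rapid-decay bounds hold. By the Fourier inversion formula one may write $u(x)=(2\pi)^{-n}\int\hat u(k)\,e^{i\langle x,k\rangle}\,dk$, where the integral converges absolutely because $\hat u$ is integrable. The hypothesis gives, for every multi-index $\alpha$, the estimate $|k^\alpha\hat u(k)|\le C_N|k|^{|\alpha|}(1+|k|)^{-N}$, which is integrable once $N>|\alpha|+n$; hence $k^\alpha\hat u(k)\in L^1(\RR^n)$ for all $\alpha$. This integrability is exactly the dominated-convergence hypothesis needed to differentiate under the integral sign arbitrarily often, yielding that $\pa^\alpha u(x)=(2\pi)^{-n}\int(ik)^\alpha\hat u(k)\,e^{i\langle x,k\rangle}\,dk$ is a continuous function of $x$ for every $\alpha$. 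Therefore $u$ coincides with a function of class $\Ci$, which is the assertion.

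I expect the only delicate point to be the converse direction, specifically the justification that differentiation under the integral is legitimate and that the resulting continuous functions are indeed the distributional derivatives of $u$; both are handled by the absolute integrability of $k^\alpha\hat u(k)$ established above, via dominated convergence applied to the difference quotients. The bookkeeping of Fourier normalization and sign conventions in the identity $k^\alpha\hat u=\widehat{(-i\pa)^\alpha u}$ is routine and can be fixed once at the outset.
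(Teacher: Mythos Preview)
The paper does not actually prove this theorem; it is stated in the appendix as a recalled fact, with the reader implicitly referred to H\"ormander. Your argument is the standard one and is correct.

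One point worth tightening in the converse direction: when you write $u(x)=(2\pi)^{-n}\int\hat u(k)\,e^{i\langle x,k\rangle}\,dk$, you are invoking Fourier inversion before knowing that $u$ is a function. The clean way is to define $v(x)$ by this absolutely convergent integral, run your dominated-convergence argument to show $v\in\Ci(\RR^n)$, and then observe that $u=v$ as tempered distributions: $u\in\Ecal'(\Omega)\subset\mathcal{S}'(\RR^n)$, and since $\hat u\in L^1(\RR^n)$ the $\mathcal{S}'$-inverse Fourier transform of $\hat u$ agrees with the classical $L^1$ one, namely $v$. This is essentially the ``delicate point'' you already flagged, just placed one step earlier in the logic. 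The remainder of your argument (integration by parts for the forward direction, integrability of $k^\alpha\hat u(k)$ for differentiation under the integral) is routine and correct.
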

We can see that a distribution is smooth if its Fourier transform decays fast at infinity.
If a distribution has a nonempty singular support we can give a further characterization of its singularity structure by specifying the direction in which it is singular. This is exactly the purpose of the definition of a wave front set.
\begin{df}
For a distribution $u \in \Dcal'(\Omega)$ the wavefront set\index{wavefront set} $\WF(u)$ is the complement in $\Omega \times \RR^n\setminus\{0\}$ of the set of points $(x,{k}) \in \Omega \times \RR^n\setminus\{0\}$ such that there exist
\begin{itemize}
\item a function $f \in\Dcal(\Omega)$ with $f(x)=1$,
\item an open conic neighborhood $C$ of ${k}$, with
\[
\sup_{{k}\in C}(1+|{k}|)^N|\widehat{f \cdot u}({k})|<\infty\qquad\forall N \in \NN_0\,.
\]
\end{itemize}
\end{df}

On a manifold $M$ the definition of the Fourier transform depends on the choice of a chart, but the property of strong decay in some direction (characterized now by a point $(x, k)$, $k\neq 0$ of the cotangent bundle $T^*M$) turns out to be independent of this choice. Therefore the wave front set WF of a distribution on a manifold $M$ is a well defined closed conical subset of the cotangent bundle (with the zero section removed).

The wavefront sets provide a simple criterion for the existence of point-wise products of distributions. Before we give it, we prove a more general result concerning the pullback. Here we follow closely \cite{BaeF,Hoer}. Let $F:X\rightarrow Y$ be a smooth map between $X\subset \RR^m$ and $Y\subset \RR^n$. We define the  normal set $N_F$ of the map $F$ as:
\[
N_F\doteq \{(F(x),\eta)\in Y\times \RR^n| ((dF_x)^T(\eta)=0\}\,,
\] 
where $(dF_x)^T$ is the transposition of the differential of $F$ at x.
\begin{thm}
Let $\Gamma$ be a closed cone in  $Y\times (\RR^n\{0\})$ and $F:X\rightarrow Y$ as above, such that $N_F\cap\Gamma=\varnothing$. Then the pullback of functions $F^*:\Ecal(X)\rightarrow\Ecal(Y)$ has a unique, sequentially continuous extension to
a sequentially continuous map $\Dcal'_\Gamma(Y)\rightarrow\Dcal'(X)$, where $\Dcal'_\Gamma(Y)$ denotes the space of distributions with WF sets contained in $\Gamma$.
\end{thm}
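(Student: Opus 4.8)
The plan is to reproduce Hörmander's pullback theorem (essentially Theorem 8.2.4 of \cite{Hoer}) in this setting: represent the distribution by Fourier inversion, substitute $F$ into the phase, and control the resulting oscillatory integral by non-stationary phase, the point being that the hypothesis $N_F\cap\Gamma=\varnothing$ makes the phase non-stationary in exactly the $\eta$-directions that matter.

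First I would reduce to a local statement. Using smooth partitions of unity on $X$ and on $Y$ one may assume that $u\in\Dcal'_\Gamma(Y)$ has compact support contained in a small coordinate ball $W$ (so that $u\in\Ecal'$ and is of some finite order $m$), and that $F^*u$ is to be tested against a fixed $\varphi\in\Dcal(X)$ supported in a small ball. On such a $u$ one \emph{defines}
\[
\langle F^*u,\varphi\rangle\doteq(2\pi)^{-n}\int \hat u(\eta)\,I_\varphi(\eta)\,d\eta\,,\qquad I_\varphi(\eta)\doteq\int\varphi(x)\,e^{iF(x)\cdot\eta}\,dx\,,
\]
and one checks that for $u\in\Ecal(W)$ this reproduces the ordinary pullback $\varphi\mapsto\int\varphi\,(u\circ F)$. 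The analytic core is the convergence of the $\eta$-integral. Let $K\doteq\supp\varphi\cap F^{-1}(\overline W)$, compact in $X$. Whenever $x\in K$ and $(F(x),\eta)\in\Gamma$, the hypothesis $N_F\cap\Gamma=\varnothing$ forces $(dF_x)^T\eta\neq0$; by compactness of $K$, homogeneity in $\eta$, and closedness of $\Gamma$, there are a closed conic neighbourhood $V\subset\RR^n\setminus\{0\}$ of the dangerous directions $\{\eta:\ (F(x),\eta)\in\Gamma\ \text{for some}\ x\in K\}$ and a constant $c>0$ with $|(dF_x)^T\eta|\geq c|\eta|$ for all $x\in K$, $\eta\in V$. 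Splitting $\int d\eta=\int_V+\int_{\RR^n\setminus V}$: on $\RR^n\setminus V$, after shrinking $W$ the set of directions in which $\hat u$ fails to decay rapidly (the projection of $\WF(u)\subseteq\Gamma$) lies inside $V$, so $\hat u(\eta)$ decays faster than any power while $|I_\varphi(\eta)|\le\|\varphi\|_{L^1}$ is bounded; on $V$, repeated integration by parts with the vector field $|(dF_x)^T\eta|^{-2}\,(dF_x)^T\eta\cdot\nabla_x$ (which fixes $e^{iF(x)\cdot\eta}$) gives $|I_\varphi(\eta)|\le C_N(1+|\eta|)^{-N}$ for every $N$, beating the polynomial bound $|\hat u(\eta)|\le C(1+|\eta|)^m$ from the finite order of $u$. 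Hence the integral converges absolutely, and the same local estimates show continuity in $\varphi$, so $F^*u\in\Dcal'(X)$.

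Sequential continuity of $u\mapsto F^*u$ from $\Dcal'_\Gamma(Y)$ to $\Dcal'(X)$ then follows because all the bounds above are uniform over subsets of $\Dcal'_\Gamma(Y)$ that are bounded in the weak topology together with the seminorms controlling rapid decay of $\hat u$ on closed cones disjoint from $\Gamma$ — precisely the information carried by the Hörmander topology $\tau_\Gamma$ introduced above. Uniqueness of the extension is immediate once one knows $\Dcal(Y)$ is sequentially dense in $\Dcal'_\Gamma(Y)$ for that topology (regularize by convolution $u\mapsto u*\rho_\varepsilon$, exactly as in the density statement for $\Dcal\subset\Ecal'_\Xi$ recalled earlier), since on smooth functions any such extension must agree with the classical pullback; and applying the result to the diagonal embedding $X\hookrightarrow X\times X$ yields the stated pointwise-product criterion, $N_{\Delta}\cap(\WF(t)\times\WF(s))=\varnothing$ being the Whitney-sum condition.

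I expect the main obstacle to be the two nested shrinking/compactness arguments that set up the conic decomposition so that ``outside $V$'' genuinely means ``$\hat u$ decays rapidly'': one needs $W$ small enough that the $\Gamma$-directions over $W$ are controlled by those over $F(K)$, and simultaneously a lower bound $|(dF_x)^T\eta|\ge c|\eta|$ on a \emph{full} conic neighbourhood rather than only on the closed dangerous cone. Once this geometric setup is in place, the non-stationary phase integration by parts and the seminorm bookkeeping needed for sequential continuity are routine.
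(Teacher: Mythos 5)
Your proposal is correct and follows essentially the same route as the paper's sketch: localization by partitions of unity, the Fourier representation $\langle F^*u,\chi\rangle=\int\hat u(\eta)\,T_\chi(\eta)\,d\eta$, a conic splitting of the $\eta$-integration so that rapid decay of $\hat u$ handles the directions away from $\Gamma$ while the non-stationary phase estimate $|T_\chi(\eta)|\leq C_{k,V}(1+|\eta|)^{-k}$ (valid where $(dF_x)^T\eta$ is bounded below) handles the rest, followed by uniform estimates for sequential continuity and density of test functions for uniqueness. Your treatment of the geometric shrinking argument and of uniqueness is, if anything, slightly more explicit than the paper's sketch, which relegates these details to H\"ormander.
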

\begin{proof} Here we give only an idea of the proof. Details can be found in  \cite{BaeF,Hoer}. Firstly, one has to show that the problem can be reduced to a local construction. Let $x\in X$. We assumed that $N_F\cap \Gamma=\varnothing$, so we can choose a compact neighborhood $K$ of $F(x)$ and an open neighborhood $\Ocal$ of $x$ such that $\overline{F(\Ocal)} \subset \textrm{int}(K)$ and the following condition holds:
\[
\exists \epsilon>0\ \textrm{s.t.}\ V\doteq\overline{\bigcup\limits_{x\in\Ocal}\{{k}|(dF_x)^T{k}\}}\ \textrm{satisfies }(K\times V)\cap\Gamma=\varnothing\,.
\]
Such neighborhoods define an cover of  $X$ and we choose  its locally finite refinement which we denote by $\{\Ocal_\al\}_{\al\in A}$, where $A$ is some index set. To this cover we have the associated family of compact sets $K_\la\subset Y$ and we choose a partition of unity $\sum\limits_{\al\in A}g_\al=1$, $\supp g_\al\subset\Ocal_\al$ and a family $\{f_\al\}_{\al\in A}$ of functions on $Y$ with $\supp f_\al=K_\al$ and $f_\al\equiv 1$ on $F(\supp g_\al)$. Then:
\[
F^*(\ph) = \sum\limits_{\al\in A} g_\al F^*(f_{\al}\ph)\,.
\]
This way the problem reduces to finding an extension of $F^*_\al\doteq(F\big|_{\Ocal_\al})^* : \Ci_c(K_\al,\RR) \rightarrow \Ci(\Ocal_\al,\RR)$
to a map on $\Dcal'_\Gamma(K_\al)$. Note that for $\ph\in C_c\infty(K_\al)$, $\supp \chi\subset\Ocal_\al$, we can write the pullback as:
\[
\left<F_\al^*(\ph),\chi\right>=\int \ph(F_\al(x))\chi(x) dx=\int \hat{\ph}(\eta)e^{i\left<F_\al(x),\eta\right>}\chi(x) dxd\eta=\int\hat{\ph}(\eta)T_\chi(\eta) d\eta\,,
\]
where we denoted $T_\chi(\eta)\doteq\int e^{i\left<F_\al(x),\eta\right>}\chi(x)dx$. We can use this expression to define the pullback for  $u\in\Dcal'_\Gamma(K_\al)$, by setting:
\[
\left<F_\al^*(u),\chi\right>\doteq\int\hat{u}(\eta)T_\chi(\eta) d\eta\,.
\]
To show that this integral converges, we can divide it into two parts: integration over $V_\al$ and over $\RR^n\setminus V_\al$, i.e.:
\[
\left<F_\al^*(u),\chi\right>=\int\limits_{V_\al}\hat{u}(\eta)T_\chi(\eta) d\eta\,+\int\limits_{\RR^n\setminus V_\al}\hat{u}(\eta)T_\chi(\eta) d\eta\\,.
\]
The first integral converges since $K_\al\times V_\al\cap\Gamma=\varnothing$ and therefore $\hat{u}(\eta)$ decays rapidly on $V_\al$, whereas $|T_\chi(\eta)|\leq\int|\chi(x)|dx$. The second integral also converges. To prove it, first we note that $\hat{u}(\eta)$ is polynomially bounded i.e. $\hat{\ph}(\eta)\leq C(1+|\eta|)^N$ for some $N$ and appropriately chosen constant $C$. Secondly, we have a following estimate on $T_\chi(\eta)$: for ever $k\in\NN$ and a closed conic subset $V\subset \RR^n$ such that $(dF_x)^T\eta\neq 0$ for $\eta\in V$, there exists a constant $C_{k,V}$ for which it holds\footnote{For the proof of this estimate see \cite{BaeF,Hoer}}
\[
|T_\chi (\eta)| \leq C_{k,V} (1 + |\eta|)^{-k}\,,
\]
Since for $\eta\in V_\al$ it holds $(dF_x)^T\eta>\epsilon>0$, we can use this estimate to prove the convergence of the second integral.

We already proved that $F^*:\Dcal'_\Gamma(Y)\rightarrow\Dcal'(X)$ exists. Now it remains to show its sequential continuity. This can be easily done, with the use of estimates provided above and  the uniform boundedness principle.
\end{proof}

Using this theorem we can define the pointwise product of two distributions $t,s$ on an $n$-dimensional manifold $M$ as a pullback by the diagonal map $D:M\rightarrow M\times M$ if  the pointwise sum of their wave front sets
\[
\WF(t) + \WF(s) = \{(x, k + k')|(x, k) \in \WF(t), (x, k') \in \WF(s)\}\,,
\]
does not intersect the zero section of $\dot{T}^*M$. This is the theorem  8.2.10 of \cite{Hoer}.  To see that this is the right criterium, note that the set of normals of the diagonal map $D: x\mapsto (x,x)$ is given by $N_D=\{(x,x,k,-k)|x\in M, k\in T^*M\}$. 
The product $ts$ is be defined by: $ts=D^*(t\otimes s)$ and if one of $t,s$ is compactly supported, then so is $ts$ and we define the contraction by $\left<t,s\right>\doteq\widehat{ts}(0)$.
 \end{fmffile}


\begin{thebibliography}{999}
\bibitem{BaeF} K. Fredenhagen, Ch. B\"ar \textit{Quantum field theory on curved spacetimes}, Lecture Notes in Physics, Vol. 786, Springer Berlin Heidelberg 2009.
\bibitem{Baer} Ch. B\"ar, N. Ginoux, F. Pf\"affle \textit{Wave Equations on Lorentzian Manifolds and Quantization}, ESI Lectures in Mathematics and Physics, European Mathematical Society Publishing House 2007.
\bibitem{BP} N.N. Bogoliubov, O.S. Parasiuk, {\it Uber die multiplikation der kausalfunktionen in der quantentheorie der felder}, Acta Math. {\bf 97} (1957) 227. 
\bibitem{BS}  N.N. Bogoliubov, D.V. Shirkov, {\it Introduction to the Theory of Quantized Fields}, 
Interscience Publishers, Inc., New York (1959). 
\bibitem{BR1} Bratteli, O., Robinson, D.W., {\it Operator algebras and
   quantum statistical mechanics, vol. 1}, 2nd edn., Springer-Verlag,
 Berlin, 1987.
 \bibitem{BR2} Bratteli, O., Robinson, D.W., {\it Operator algebras and
   quantum statistical mechanics, vol. 2},  2nd edn., Springer-Verlag,
 Berlin, 1997.
\bibitem{BF0}  R. Brunetti, K. Fredenhagen, \textit{Microlocal Analysis and Interacting Quantum Field Theories: Renormalization on Physical Backgrounds}, Commun. Math. Phys. {\bf 208} (2000) 623-661.
\bibitem{BF1}  R. Brunetti, K. Fredenhagen \textit{Towards a Background Independent Formulation of Perturbative Quantum Gravity}, proceedings of Workshop on Mathematical and Physical Aspects of Quantum Gravity, Blaubeuren, Germany, 28 Jul - 1 Aug 2005. In Fauser, B. (ed.) et al.: Quantum gravity, 151-159, [arXiv:gr-qc/0603079v3].
\bibitem{BFIR}  R. Brunetti, K. Fredenhagen, P. Imani, K. Rejzner \textit{The Locality Axiom in Quantum Field Theory and Tensor Products of $C^*$-algebras}, [arXiv:math-ph/1206.5484].
\bibitem{BFK95}  R.~Brunetti, K.~Fredenhagen, M.~K\"ohler,
\textit{The microlocal spectrum condition and Wick polynomials of free fields on
  curved spacetimes}, Commun.\ Math.\ Phys.\  {\bf 180} (1996) 633, [arXiv:gr-qc/9510056].
\bibitem{BFV} R. Brunetti, K. Fredenhagen, R. Verch, \textit{The generally covariant locality principle - A new paradigm for local quantum field theory}, Commun. Math. Phys. \textbf{237} (2003) 31-68.
\bibitem{BDF}R. Brunetti, M. D\"utsch, K. Fredenhagen, \textit{Perturbative Algebraic Quantum Field Theory and the Renormalization Groups}, Adv. Theor. Math. Phys. {\bf 13} Number 5 (2009) 1541-1599, [arXiv:math-ph/0901.2038v2].
\bibitem{ChF} B. Chilian, K. Fredenhagen, \textit{The Time Slice Axiom in Perturbative Quantum Field Theory on Globally Hyperbolic Spacetimes}, Commun.Math. Phys., {\bf 287} (2008) 513-522.
\bibitem{Dim} J. Dimock, \textit{Algebras of local observables on a manifold}, Commun. Math. Phys. {\bf 77} (1980) 219.
\bibitem{DueBoas} M. D\"utsch,  F.-M. Boas, \textit{The Master Ward Identity}, Rev. Math. Phys {\bf 14},  (2002) 977-1049.
\bibitem{Duetsch:2000nh}
  M.~D\"utsch and K.~Fredenhagen,
  \textit{Algebraic quantum field theory, perturbation theory, and the loop
  expansion},
  Commun.\ Math.\ Phys.\  {\bf 219} (2001) 5,
  [arXiv:hep-th/0001129].
\bibitem{Dutsch:2004hu}
  M.~D\"utsch and K.~Fredenhagen,
  \textit{Perturbative renormalization and BRST}, Written on request of the ''Encyclopedia of Mathematical Physics'', [arXiv:hep-th/0411196].
\bibitem{DF} M.~D\"utsch and K.~Fredenhagen,
  \textit{Perturbative algebraic field theory, and deformation quantization}, Proceedings of the Conference on Mathematical Physics in Mathematics and Physics, Siena June 20-25 2000, [arXiv:hep-th/0101079].
\bibitem{DF04}
M.~D\"utsch and K.~Fredenhagen,
\textit{Causal perturbation theory in terms of retarded products, and a proof  of
the action Ward identity}, Rev. Math. Phys. {\bf 16} Issue 10 (2004) 1291-1348, [arXiv:hep-th/0403213].
\bibitem{DF02}
  M.~D\"utsch and K.~Fredenhagen,
  \textit{The master Ward identity and generalized Schwinger-Dyson equation in
  classical field theory},
  Commun.\ Math.\ Phys.\  {\bf 243} (2003) 275, 
  [arXiv:hep-th/0211242].
\bibitem{DF07}
M.~D\"utsch and K.~Fredenhagen,
\textit{Action Ward Identity and the Stueckelberg-Petermann renormalization group}, Prog.Math. {\bf 251} (2007) 113-124, [arXiv:hep-th/0501228v1].
\bibitem{EG} H. Epstein., V. Glaser, \textit{The role of locality in 
perturbation theory}, Ann. Inst. H. Poincar\'e A {\bf 19} (1973) 211.
\bibitem{FR} K. Fredenhagen, K. Rejzner, \textit{Batalin-Vilkovisky formalism in the functional approach to classical field theory}, [arXiv:math-ph/1101.5112].
\bibitem{FR2} K. Fredenhagen, K. Rejzner, \textit{Local covariance and background independence},  to be published in the Proceedings of the conference ``Quantum field theory and gravity'', Regensburg (28 Sep - 1 Oct 2010), [arXiv:math-ph/1102.2376].
\bibitem{FRQ} K. Fredenhagen, K. Rejzner, \textit{Batalin-Vilkovisky formalism in perturbative algebraic quantum field theory}, [arXiv:math-ph/1110.5232].
 \bibitem{HK} R. Haag, D. Kastler, \textit{An algebraic approach to quantum
  field theory}, \textit{J.\ Math.\ Phys.}\ {\bf 5}, 848 (1964).
\bibitem{Haag0} R. Haag, \textit{Discussion des ``axiomes'' et des propri\'et\'es asymptotiques d'une th\'eorie des champs locales avec particules compos\'ees, Les Probl\`{e}mes Math\'ematiques de la Th\'eorie Quantique des Champs}, Colloque Internationaux du CNRS LXXV (Lille 1957), CNRS Paris (1959), 151.
\bibitem{Haag} R. Haag, {\it Local Quantum Physics}, 2nd ed.\,
\bibitem{Ham} R. S. Hamilton, \textit{The Inverse Function Theorem of Nash and Moser}, Bull. Amer. Math. Soc. (N.S.) {\bf 7} Number 1 (1982) 65-222.
\bibitem{H} S. Hollands \textit{Renormalized Quantum Yang-Mills Fields in Curved  
Spacetime}, Rev.\ Math.\ Phys.\  {\bf 20} (2008) 1033,
 [arXiv:gr-qc/0705.3340v3].
 \bibitem{HW}  S. Hollands, R.M. Wald, \textit{Local Wick polynomials and time ordered products of quantum fields in curved spacetime}, Commun. Math. Phys. {\bf 223} (2001), 289. 
\bibitem{HW2} S. Hollands,  R. M. Wald, \textit{Existence of Local Covariant
Time-Ordered-Products of Quantum Fields in Curved Spacetime},
Commun. Math. Phys. {\bf 231} (2002) 309-345.
\bibitem{HW3} S. Hollands, R. M. Wald, \textit{On the Renormalization Group
  in Curved Spacetime}, Commun. Math. Phys. {\bf 237} (2003) 123-160.
\bibitem{HW5} S. Hollands, R. M. Wald, \textit{Conservation of the stress tensor in interacting quantum field theory in curved spacetimes}, Rev. Math. Phys. {\bf 17} (2005) 227, [arXiv:gr-qc/0404074].
\bibitem{Hoer} L. H\"ormander \textit{The analysis of linear partial differential operators I: Distribution theory and Fourier analysis}, Springer 2003.
\bibitem{Jak} S. Jakobs, \textit{Eichbr\"ucken in der klassischen Feldtheorie}, diploma thesis under the supervision of  K. Fredenhagen, Hamburg, February 2009, DESY-THESIS-2009-009.
\bibitem{Kai} K. J. Keller, \textit{Dimensional Regularization in Position Space and a Forest Formula for Regularized Epstein-Glaser Renormalization}, Ph.D thesis, Hamburg 2010, [arXiv:math-ph/1006.2148v1].
\bibitem{Kontsevich:1997vb} M. Kontsevich, \textit{Deformation quantization of Poisson manifolds, I}, Lett. Math. Phys. {\bf 66} (2003), 157-216.
\bibitem{Michor}A. Kriegl, P. Michor, \textit{Convenient setting of global analysis}, Mathematical Surveys and Monographs {\bf 53}, American Mathematical Society, Providence 1997. Online version: \verb1http://www.ams.org/online_bks/surv53/1.
\bibitem{Neeb} K.-H. Neeb, \textit{Monastir Lecture Notes on Infinite-Dimensional Lie Groups}, \verb1http://www.math.uni-hamburg.de/home/wockel/data/monastir.pdf1.
\bibitem{OS} K. Osterwalder, R. Schrader, \textit{Axioms for Euclidean Green's functions}, Comm.
Math. Phys. {\bf 31} (1973), 83-112.
\bibitem{Polchinski} Polchinski, J., `` Renormalization and Effective Lagrangians,'' {\it Nucl. Phys.} {\bf B231} (1984) 269-295.
\bibitem{Rad}M. J. Radzikowski, \textit{Micro-local approach to the Hadamard condition in quantum field theory on curved spacetime}, Commun. Math. Phys. {\bf 179} (1996) 529-553.
\bibitem{Reed} B. Simon, M. Reed, \textit{Methods of Modern Mathematical Physics, I: Functional Analysis},  Academic Press, 1980.
\bibitem{Rej} K. Rejzner, \textit{Fermionic fields in the functional approach to classical field theory}, [arXiv:math-ph/1101.5126v1].
\bibitem{Rud} W. Rudin, \textit{Functional analysis}, Mcgraw-Hill Publ.Comp. 1991.
\bibitem{Salmhofer2} Salmhofer, M., {\it Renormalization. An introduction.} Texts and Monographs in Physics. Springer-Verlag, Berlin, (1999).
\bibitem{Sha} D. Shale, \textit{Linear symmetries of free boson fields}, Trans. Am. Math. Soc. {\bf 103}, 149 (1962).
\bibitem{Sch0} L. Schwartz \textit{Th\'eorie des distributions}, Hermann Paris 1950.
\bibitem{Sch1} L. Schwartz, \textit{Th\'eorie des distributions \'a valeurs vectorielles.I}, Annales de l'institut Fourier \textbf{7} (1957) 1-141. 
\bibitem{Sch2} L. Schwartz, \textit{Th\'eorie des distributions \'a valeurs vectorielles.I}, Annales de l'institut Fourier \textbf{8} (1958) 1-209.
\bibitem{Stuck} E.C.G.  St\"uckelberg, D. Rivier, \textit{A propos des divergences en th\'eorie des champs quantifi\'es}, Helv. Phys. Acta {\bf 23} (1950) 236-239.
\bibitem{Steinmann} O. Steinmann \textit{Perturbation Expansions in Axiomatic Field Theory}, Lect. Notes in Phys. {\bf 11}, Berlin: Springer-Verlag, 1971.
\bibitem{SP}St\"uckelberg, E.C.G., Petermann, A., ``La normalisation
  des constantes dans la th\'eorie des quanta,'' {\it Helv. Phys. Acta}
{\bf 26} (1953) 499-520.
\bibitem{Torr} C. Torre, M. Varadarajan, \textit{Functional evolution of free quantum fields}, 
Class. Quant. Grav. {\bf 16(8)} (1999) 2651.
\bibitem{Ver}R. Verch, Ph.D. Thesis, University of Hamburg, 1996.
\bibitem{Trev} F. Treves, \textit{Topological Vector Spaces, Distributions and Kernels}, 1967 Academic Press Inc., San Diego California.
\bibitem{WG64} A. S. Wightman, L. G{\aa}rding, \textit{Fields as operator valued distributions in relativistic quantum theory}, Ark. Fys, {\bf 28} (1964) 129-189.
\bibitem{Z} W. Zimmermann, \textit{Renormalization Theory}, Erice lectures 1975, ed. G. Velo and A. S. Wightman.
\end{thebibliography}
\end{document}